\documentclass[10pt,a4paper]{article}
\usepackage[latin1]{inputenc}
\usepackage{amsmath}
\usepackage{amsfonts}
\usepackage{amssymb}
\usepackage{amsthm}
\usepackage[usenames,dvipsnames,svgnames,table]
           {xcolor}

\usepackage{mathrsfs}		% Raph Smith's formal script font

%\usepackage[vlined,boxed,linesnumbered]
%           {algorithm2e}   % We do *not* have to use algorithm2e, but it's
                           % the only one whose syntax I know -jjb
\usepackage{algorithm}
\usepackage{algorithmic}
\usepackage{hyperref}

\usepackage{tikz}

\usepackage{makecmds}

% MACROS

% FOR LATEX VARIABLE/MACRO ISOLATION

% THEOREMS, PROOFS
%\newtheorem{theorem}{Theorem}

% MATH MODE
\newcommand{\emath}[1]{\ensuremath{#1}}

% GENERIC VARIABLE DEFINITIONS (improve ``searchability'')

% FROM MARCO

%\renewcommand{\natural}{{\mathbb{N}}}
%\newcommand{\naturals}{\natural}
%\newcommand{\prob}[1]{\mbox{$\mathbb{P}\left[#1\right]$}}

%\newcommand{\probcond}[2]{\mbox{$\mathbb{P}\left[#1| \,#2\right]$}} 

% THEOREM ENVIRONMENTS
% uglies freaking hack ever
\ifdefined\kylehackusingbeamer
\else
	
	\newtheorem{theorem}{Theorem}[section]
	%\providetheorem{theorem}{Theorem}[section]
	\newtheorem{definition}[theorem]{Definition}
	
	\newtheorem{lemma}[theorem]{Lemma}
	
	\newtheorem{remark}[theorem]{Remark}
\fi
	\newtheorem{prop}[theorem]{Proposition}

\provideenvironment{example}[1][Example]{
\begin{trivlist}
\item[\hskip \labelsep {\bfseries #1}]
\it	% Kyle hack to get italic statement.
}{ \end{trivlist} }

%\newtheorem{assump}{Assumption}
%\newtheorem{theorem}{Theorem}[section]
%\newtheorem{definition}[theorem]{Definition}
%\newtheorem{problem}{Problem}
%\newtheorem{lemma}[theorem]{Lemma}
%\newtheorem{prop}[theorem]{Proposition}
%\newtheorem{corollary}[theorem]{Corollary}
%\newtheorem{remark}[theorem]{Remark}
%\newtheorem{example}[theorem]{Example}

% MATH - OBJECTS

\newcommand{\reals}{ \emath{\mathbb{R}} }
%\newcommand{\reals}{\real}

% MATH - OPERATORS

%\newcommand{\traceop}{\emath{ {\rm tr} }}

% MATH - COMMON QUANTITIES

%\newcommand{\starttime}{\emath{ {\timevar_0} }}

% SET NOTATION

% PROBABILITY NOTATION

%\newcommand{\matfont}{1}{\emath{ {\bf #1} }}

% OPTIMIZATION

% SPECIFIC VARIABLE DEFINITIONS
%\newcommand{\sysfunc}{\emath{f}}
%\newcommand{\visfunc}{\emath{h}}
%\newcommand{\vehdynamics}{\emath{g}}
%\newcommand{\distfunc}{\emath{d}}

\newcommand{\Transposed}{^\top}
\newcommand{\SetOf}[1]{{ \left\{ #1 \right\}  }}
\newcommand{\SuchThat}{\,|\,}

\providecommand{\ensurecommand}[2]{
\providecommand{#1}{}
\renewcommand{#1}{#2}
}

\providecommand{\integers}{{\mathbb Z}}
\providecommand{\gradient}{\nabla}

\providecommand{\argholder}{{\, \cdot \,}}

\newtheorem{myprob}{Problem}[section]

% Labeling

%\newcommand{\argmax}{\operatornamewithlimits{arg\,max}}
%\newcommand{\env}{ \emath{ \mathcal{A} } }

\newcommand{\env}{\emath{\Omega}}

%\newcommand{\utilization}{\emath{\rho}}

%\newcommand{\enroute}{\emath{ { \rm PD } }}

		% as in 1/speed

						% location RV, as opposed to value

%\newcommand{\demspread}{\emath{\phi}}

%\newcommand{\distfunc}{\emath{D}}

%\newcommand{\statespace}{{\mathcal G}}

%\newcommand{\proj}{\operatorname{proj}}

%

%\renewcommand{\distfunc}[1]{\vecnorm{#1}}

%%Kyle stuff

%\newcommand{\nextdemand}{\emath{   \text{P$|$D}    }}

%\newcommand{\nextdemand}{ \text{D$\rightarrow$P} }
%\newcommand{\travelbetween}{ \emath{D} }

%\newcommand{\picktodelv}{ \text{P$\to$D} }
%\newcommand{\delvtopick}{ \text{D$\to$P} }

%\newcommand{\Match}{\emath{ {L_\text{M}} }}

%\newcommand{\cellset}{\emath{\mathcal{C}}}

% permutation commands

\newcommand{\cyclevar}{\mathcal L}

\newcommand{\permvar}{\emath{\sigma}}

% history variables
% new section commands

% Dubins stuff

%\newcommand{\configpart}{\emath{\cellset_\configspace}}

%\newcommand{\configcell}{\emath{\workcell_\configspace}}

%%Marco stuff

% MACROS FOR NEW MATERIAL

% paths, path constraint macros
\newcommand{\pathvar}{{\mathcal P}}

% stacker crane, multi-crane macros

% for the geometric network

\newcommand{\nodeset}{V}

%\newcommand{\interior}{\pointmap}

% for optimization formulations
\newcommand{\edgeset}{{\mathcal A}}

%\newcommand{\geosites}{{\mathcal X}}

% for Parisian paths
%\newcommand{\altpathset}{{\mathbb Q}}

% for state-spaces and projections

% for Parisian state space

%\newcommand{\geoconfigs}{{\mathcal Y}}
%\newcommand{\frenchnet}{{\mathcal F}}

\providecommand{\numnodes}{n}
\providecommand{\numedges}{m}
\newcommand{\numpoints}{M}

\providecommand{\coordvar}{y}

\providecommand{\cyclevar}{}
\renewcommand{\cyclevar}{{\mathscr C}}

%%%%% linear programming %%%%% (do we need?)
\providecommand{\polytope}{{\mathbb P}}
\providecommand{\constrmat}{A}
\providecommand{\constrrow}{a}
\providecommand{\constrvec}{b}
\providecommand{\solnvar}{x}
\providecommand{\solnvec}{{\bf\solnvar}}
\providecommand{\costvec}{c}

\providecommand{\pointsone}{S}
\providecommand{\pointstwo}{T}

\title{An $O(\numpoints \log \numpoints)$ Algorithm for Bipartite Matching with Roadmap Distances}
\author{Kyle~Treleaven, Josh~Bialkowski, Emilio~Frazzoli}
\date{}

\begin{document}

\maketitle

\setlength{\marginparwidth}{1.5in}
\newcommand{\ktmargin}[2]{{\color{orange} #1}\marginpar{\small\noindent{\raggedright{\color{orange}[KT]}\color{orange}{#2} \par}}}
\newcommand{\jbmargin}[2]{{\color{Blue} #1}\marginpar{\small\noindent{\raggedright{\color{Turquoise}[JB]}\color{Blue}{#2} \par}}}

\begin{abstract}
%It is shown that 
An algorithm is presented which produces the minimum cost bipartite matching
between two sets of $\numpoints$ points each,
where the cost of matching two points is proportional to 
the minimum distance by which a particle could reach one point from the other
while constrained to travel on a connected set of curves, or \emph{roads}.
%distance between them
%on a \emph{roadmap} described in terms of a set of lines or curves
%%(perhaps embedded in Euclidean $\reals^2$)
%connected together into a particular pattern by their endpoints;
%the distance between points on a roadmap is 
Given any such ``roadmap'',
% with $\numedges$ roads,
%%can be produced
%\ktmargin{
the algorithm obtains
$O(\numpoints \log \numpoints)$ total runtime in terms of $\numpoints$, which 
is the best possible bound in the sense that any algorithm for minimal matching
has runtime $\Omega(\numpoints \log \numpoints)$.
The algorithm is strongly polynomial and is based on a \emph{capacity-scaling} approach to
the [minimum] convex cost flow problem.
The result generalizes the known $\Theta(\numpoints \log \numpoints)$ complexity
of computing optimal matchings between two sets of points on
(i) a line segment, and (ii) a circle.
\end{abstract}

\section{Introduction}

\providecommand{\points}{P}
\newcommand{\pointvar}{s}
\newcommand{\altpointvar}{t}

\providecommand{\permvar}{}
\renewcommand{\permvar}{\pi}
\providecommand{\Match}{{\mathcal A}}		% for "assignment"

\providecommand{\distance}{d}

\newcommand{\length}{L}
\newcommand{\cost}{C}

Given two sets
$\pointsone = (\pointvar_1,\pointvar_2,\ldots,\pointvar_\numpoints)$ and
$\pointstwo = (\altpointvar_1,\altpointvar_2,\ldots,\altpointvar_\numpoints)$,
of $\numpoints$ points each
from a domain $\env$,
% on a line segment,
a matching is a subset of pairs $\Match \subset \pointsone \times \pointstwo$
such that each point appears in $\Match$ exactly once.
Every matching represents a unique bijective mapping of $\pointsone$ onto $\pointstwo$, and therefore,
is uniquely determined by some permutation $\permvar$ of $(1,2,\ldots,\numpoints)$,
in the sense that $\Match = \{ (\pointvar_k,\altpointvar_{\permvar_k}) \}_{k=1}^\numpoints$.
Given a distance metric $\distance : \env^2 \to \reals_{\geq 0}$,
we define the cost of a match $(\pointvar,\altpointvar) \in \Match$ as
$\distance(\pointvar,\altpointvar)$ and
the cost of the matching $\Match$ as the sum over pairs
$\sum_{(\pointvar,\altpointvar) \in \Match} \distance(\pointvar,\altpointvar)$.
The problem of finding the minimum cost matching, or \emph{minimal} matching, is called
the assignment problem or the bipartite matching problem.

\emph{Literature Review}:
%%%% applications %%%%%
Assignment problems~\cite{burkard2009assignment} have direct applications in domains such as
operations management, computer science, computational biology, and computational music.
For example,
%scheduling jobs on machines, either in factories 
they can be used to represent the problem of scheduling work to assets optimally
(e.g., programs to computer processors, or jobs to machines or workers).
In computational biology, the many-to-one assignment problem on a line is used
to solve the restriction scaffold assignment problem~\cite{colannino2006n},
an important step which is solved millions of times in DNA sequencing.
In computational music, assignment costs are used to measure the similarity of musical rhythms~\cite{diaz2004compas}.
%
%The bipartite matching problem and related 
Assignment problems also
have numerous uses for solving and approximating other combinatorial optimization problems.
%
% my bit for the boys
For example, bipartite matching dominates the runtime complexity of the so-called LARGEARCS algorithm,
part of the best known constant-factor algorithm~\cite{FHK:S76}
for the \emph{Stacker Crane problem} (SCP).
The SCP is a NP-Hard problem whose objective is
to obtain the smallest tour through many one-way transportation demands.
LARGEARCS has been shown to be asymptotically optimal
for large randomly generated instances (almost surely)~\cite{treleaven2013},
so efficient algorithms for matching are desirable for dealing with large SCP instances.
%

%%%%% algorithms %%%%%
In general, the assignment problems can be is solved in $O(\numpoints^3)$ time
using the classical Hungarian method~\cite{kuhn1955hungarian}.
If the points are in $\reals^2$ and the distance between them is Euclidean,
then there is a class of algorithms~\cite{Agarwal:1995} for the one-to-one assignment, or bipartite matching problem, which
achieve $O(\numpoints^{2+\epsilon})$ time for any $\epsilon>0$.
If the points are on a line, then
there is a trivial $O(\numpoints \log \numpoints)$ algorithm to solve the one-to-one assignment problem,
the optimality of which is proved, e.g., in~\cite{werman1985distance}.
For the case that the points lie on a circle, another $O(\numpoints \log \numpoints)$ algorithm was given
originally by Karp and Li~\cite{karp1975two}, which has been refined somewhat by others, e.g.,~\cite{werman1986bipartite}.
Colannino et. al. provided companion $O(\numpoints \log \numpoints)$ algorithms on the line
for both the many-to-one~\cite{colannino2006n} and
many-to-many~\cite{colannino2007efficient} versions of the assignment problem;
both algorithms are fundamentally based on the original insights developed by Karp about the \emph{circle}, though
they provide non-trivial additional insights themselves.
(The algorithms on the line are actually $O(\numpoints)$ if the points are already sorted; however,
on the circle the runtime remains $O(\numpoints \log \numpoints)$.)
The best known many-to-one and many-to-many algorithms on the circle remain $O(\numpoints^2)$.

\emph{Contribution}:
In this paper, we provide an algorithm which computes the minimal matching between two sets
of $\numpoints$ points each,
on a fixed but arbitrary \emph{roadmap} with $\numedges$ roads and $\numnodes$ vertices.
%As the number of points $\numpoints$ grows large,
The runtime of the algorithm is
$O(\numpoints \log \numpoints )
	+ \log \numpoints \times O( \numedges ( \numedges + \numnodes \log \numnodes ) )$, which
for a fixed roadmap is
dominated asymptotically by the term $O(\numpoints \log \numpoints)$.
Such is the best runtime bound achievable in terms of $\numpoints$.
The design of the algorithm is influenced by
the elegant treatment of bipartite matching
on lines and circles in~\cite{werman1986bipartite}, which
are the two roadmap structures one can construct from a single road
($\numedges = 1$).
The crucial component of the algorithm is
the application of a \emph{capacity-scaling} approach for the minimum \emph{convex cost flow problem},
developed, e.g., in~\cite[Sec.~14.5]{ahuja1993network}.

Our algorithm can also provide a speed-up in cases where the roadmap geometry is \emph{implicit}.
%to improve the time to solve matching problems of the following kind:
Suppose, for example, that
the minimal bipartite matching is desired between $\pointsone$ and $\pointstwo$, where
matches have costs equal to the lengths of shortest paths on
a weighted, undirected graph $(\pointsone \cup \pointstwo \cup V, E)$ with $\numnodes$ vertices and $\numedges$ edges.
A standard approach is to cast such problem as a minimum cost flow problem with unit supplies, which
can be solved in $O(\numedges ( \numedges + \numnodes \log \numnodes ))$ time.
If $\numedges$ is very close to $2\numpoints$, then
such approach represents a factor nearly $\numpoints$ speed-up,
compared, e.g., to the Hungarian method on a metric completion graph.
If additionally most of the $\numedges$ edges have degree $2$, then
using the technique of this paper,
an \emph{additional} factor nearly $\numpoints$ speed up may be possible:
%is not significantly larger than $2\numpoints$.
%In such case, it is clear that nearly all of the $\numedges$ edges have degree $2$.
%In such case, however, most of the number of
Letting $\numpoints'$ denote (roughly) half the number of degree-$2$ vertices,
one can produce an equivalent matching instance of $2\numpoints'$ points on a roadmap
with $\numnodes'$ vertices and $\numedges'$ edges, where
$\numnodes'$ is the number of \emph{non-} degree-$2$ vertices, and
$\numedges'$ is their total degree.
Then using the algorithm of this paper, the instance can be solved in 
$O(\numpoints' \log \numpoints' )
	+ \log \numpoints' \times O( \numedges' ( \numedges' + \numnodes' \log \numnodes' ) )$
time.

\subsection{Organization}

The rest of the paper is organized as follows:
In Section~\ref{sec:lines and circles} we review the existing results for bipartite matching
on line segments and circles, which
are the two most basic roadmaps.
We state the problem of finding the minimal bipartite matching between points on a general roadmap
rigorously in Section~\ref{sec:problem statement}.
%
%We review relevant background material in Section~\ref{sec:preliminaries}, including
%some graph notation, elementary data structures, and the theory of linear programming.
%
In Section~\ref{sec:roadmap matching} we generalize the analysis of bipartite matching on line and circles
and introduce an optimal algorithm for bipartite matching on general roadmaps.
We discuss computational complexity in Section~\ref{sec:complexity} and demonstrate that
our algorithm can be computed in $O(\numpoints \log \numpoints)$ time.
We provide some discussion of the results and closing remarks in Section~\ref{sec:discussion}.

\section{Optimal Matching on Lines and Circles} \label{sec:lines and circles}

\providecommand{\nodeset}{V}
\providecommand{\arcset}{A}
\providecommand{\arcvar}{a}

\providecommand{\edgeset}{}
\renewcommand{\edgeset}{E}

\newcommand{\Numarcs}{Z}
\newcommand{\numarcs}{z}
\newcommand{\cumarcs}{F}
\newcommand{\cumval}{f}
\newcommand{\cumvals}{{\bf F}}

\newcommand{\postcumarcs}{H}

\newcommand{\cumpoints}{N}

If the domain $\env$ is a single line segment with
distance defined by $\distance(\pointvar,\altpointvar) = |\altpointvar-\pointvar|$, and
both $\pointsone$ and $\pointstwo$ are \emph{sorted}, then
it is well known (shown, e.g., in~\cite{werman1985distance})
% and easy to argue
%that if $\pointsone$ and $\pointstwo$ are \emph{sorted} sets, then
that the minimal matching is determined by the identity permutation $\permvar^1$, i.e.,
$\Match
	= \{ (\pointvar_1,\altpointvar_1),(\pointvar_2,\altpointvar_2),\ldots,(\pointvar_\numpoints,\altpointvar_\numpoints) \}$.
Therefore, if $\pointsone$ and $\pointstwo$ are already sorted, the minimal matching can be transcribed in $\Theta(\numpoints)$ time.
If the points are not sorted, then sorting them first takes $\Theta(\numpoints \log \numpoints)$ time in the worst case.

Such a straightfoward scenario would seem to warrant little additional discussion, but
\cite{werman1986bipartite} presents a characterization of the 
%We discuss in this section a useful characterisation of the 
\emph{cost} of the minimal matching, which becomes useful in various extensions:
Let $[0,\length)$ be the interval containing $\pointsone$ and $\pointstwo$, and
%, and
%recall that any match $(\pointvar,\altpointvar)$ has cost equal to
%the length of the interval between $\pointvar$ and $\altpointvar$ on $[0,\length)$.
let us orient all matches $(\pointvar,\altpointvar)$
from $\pointvar$ to $\altpointvar$.
%the endpoint in $\pointsone$ to the endpoint in $\pointstwo$.
If $\pointvar < \altpointvar$ we say the match is \emph{forward}, because
the orientation of the match is in the direction of the positive coordinate axis;
if $\pointvar > \altpointvar$, then we say the match is \emph{backward}.
Let $\postcumarcs(y)$ denote
the total number of forward matches crossing a coordinate $y$ \emph{minus}
the total number of backward matches crossing $y$.
% in the positive direction.
%($\cumarcs(y)$ will take negative values instead if a positive number of matches cross $y$
%in the negative direction.)
%
A matching $\Match$ is called \emph{unidirectional} if at every such coordinate,
all of the matches crossing the point have the same orientation.
\begin{lemma} \label{lemma:minimal is unidirectional}
Every \emph{minimal} matching is unidirectional.
\end{lemma}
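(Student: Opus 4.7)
The plan is to show that the cost of any matching on the line has a lower bound that depends only on the point sets $\pointsone$ and $\pointstwo$ (not on the permutation), and that this lower bound is attained exactly by the unidirectional matchings. Any minimal matching must attain the lower bound, and hence must be unidirectional.

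First, I would rewrite $\postcumarcs(y)$ for an arbitrary matching $\Match$ determined by permutation $\permvar$. For a match $(\pointvar_k, \altpointvar_{\permvar_k})$, its signed contribution to $\postcumarcs(y)$ is exactly $\ones[\pointvar_k \leq y] - \ones[\altpointvar_{\permvar_k} \leq y]$: this gives $+1$ when the match is forward and straddles $y$, $-1$ when backward and straddling, and $0$ otherwise. Summing over $k$ and using that $\permvar$ is a bijection of indices,
\[
\postcumarcs(y) = \#\{k : \pointvar_k \leq y\} - \#\{k : \altpointvar_k \leq y\},
\]
so $\postcumarcs(y)$ depends only on the multisets $\pointsone$ and $\pointstwo$, not on the choice of matching. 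This invariance is the key observation; everything else follows easily.

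Next, I would let $\omega_\Match(y)$ denote the \emph{total} (unsigned) number of matches of $\Match$ crossing coordinate $y$. Since the length of each match equals the Lebesgue measure of the set of coordinates it crosses, the cost of $\Match$ can be written as
\[
\sum_{(\pointvar,\altpointvar) \in \Match} |\altpointvar - \pointvar| = \int_0^\length \omega_\Match(y) \, dy.
\]
By the triangle inequality for the signed vs.\ unsigned sum, $\omega_\Match(y) \geq |\postcumarcs(y)|$ pointwise, with equality at $y$ if and only if all matches crossing $y$ share a common orientation. Integrating,
\[
\text{cost}(\Match) \;\geq\; \int_0^\length |\postcumarcs(y)|\, dy,
\]
and this lower bound is a constant that does not depend on $\Match$.

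Finally, equality holds in the integral bound precisely when $\omega_\Match(y) = |\postcumarcs(y)|$ for almost every $y$, which is exactly the definition of $\Match$ being unidirectional. Since a unidirectional matching does exist (e.g., sorting both sets and using the identity permutation yields an all-forward matching), the lower bound is in fact the optimum, and any minimal matching must meet it and hence be unidirectional. The only delicate step is the invariance of $\postcumarcs$ under the choice of matching; once this is in hand, the argument is essentially a one-line integral comparison.
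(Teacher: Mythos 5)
Your proof is correct in substance but takes a genuinely different route from the paper. The paper argues locally by exchange: assuming some coordinate $\coordvar$ is crossed by both a forward and a backward match, it swaps the $\pointstwo$-endpoints of one such pair and exhibits a strict cost decrease of at least $4\epsilon$, contradicting minimality. You instead argue globally: you observe that the signed crossing count $\postcumarcs(\coordvar)$ is an invariant of the point sets (independent of the matching), deduce the matching-independent lower bound $\int_0^\length |\postcumarcs(\coordvar)|\,d\coordvar$ on the cost, and characterize the equality case as unidirectionality. Your approach is arguably cleaner and yields the cost formula~\eqref{eq:matching cost on line} as a free byproduct rather than as a separate corollary; however, it leans essentially on the invariance of $\postcumarcs$, which holds only on the line. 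On a circle or a general roadmap the crossing count is $\cumarcs(\coordvar;\roadvar) + \numarcs_\roadvar$ with the integers $\numarcs_\roadvar$ depending on the matching, so the lower bound is no longer matching-independent and your argument does not transfer directly --- which is precisely why the paper chooses the exchange argument, advertising that it ``generalizes readily to every scenario in this paper.'' Two small points to repair: (i) your parenthetical claim that the sorted identity permutation yields an \emph{all-forward} matching is false (e.g., $\pointsone=(2)$, $\pointstwo=(1)$ gives a backward match); what is true, and what you actually need, is that the sorted identity matching is \emph{unidirectional}, which follows since a forward match $j$ and a backward match $k$ crossing the same $\coordvar$ would force both $\pointvar_j < \pointvar_k$ and $\altpointvar_k < \altpointvar_j$, contradicting sortedness --- without this existence step your lower-bound argument cannot conclude that a non-unidirectional matching fails to be minimal; (ii) the ``almost every $\coordvar$'' versus ``every $\coordvar$'' distinction in the equality case should be dispatched by noting that two oppositely oriented matches crossing a point cross a whole open interval around it, so a violation of unidirectionality has positive measure.
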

\begin{proof}
A simple proof is given in Lemma~3 of~\cite{werman1986bipartite}, but
we provide another proof that generalizes readily to every scenario in this paper.
The proof is by contradiction:
Assume that $\Match^*$ is a minimal match but there is some coordinate $\coordvar$
crossed by $\numpoints^+ > 0$ forward matches and $\numpoints^- > 0$ backward matches.
Consider the case that there are no points at $\coordvar$.
If $\coordvar$ has points, then we simply ignore all the matches that start or end there (they do not \emph{cross}).
With simliar justification, assume that $\coordvar$ is an interior point.
Then there is a neighborhood $(\coordvar-\epsilon,\coordvar+\epsilon)$ for some $\epsilon > 0$ which
contains no points and
is crossed by exactly the matches through $\coordvar$.
Choose one of the forward matches and one of the backward matches and exchange their endpoints in $\pointstwo$.
The resulting matching has cost at least $4\epsilon$ less than the \emph{minimal} matching $\Match^*$.
\end{proof}

Note by Lemma~\ref{lemma:minimal is unidirectional} that
the total number of matches crossing a point $\coordvar$
is $|\postcumarcs(\coordvar)|$
for the minimal matching $\Match^*$;
if $\postcumarcs(\coordvar) > 0$ then they are forward matches;
otherwise, they are backward matches.
Thus, the total length of all matches, i.e.,
the cost of the optimal matching, can be written as
\begin{equation} \label{eq:matching cost on line}
	\int_0^\length | \postcumarcs(y) | \ dy.
\end{equation}

Let $\cumpoints_\pointsone(y)$ denote the number of points $\pointvar \in \pointsone$ such that $\pointvar < y$,
i.e., $| \pointsone \cap [0,\coordvar) |$,
let $\cumpoints_\pointstwo(y)$ denote the number of points $\altpointvar \in \pointstwo$ such that $\altpointvar < y$,
i.e., $| \pointstwo \cap [0,\coordvar) |$, and
let $\cumarcs(y) \doteq \cumpoints_\pointsone(y) - \cumpoints_\pointstwo(y)$.
Note that since $|\pointsone| = |\pointstwo|$ we have $\cumarcs(0) = \cumarcs(\length) = 0$.
As argued in~\cite{werman1986bipartite},
$\cumarcs(y) = \postcumarcs(y)$ on a line segment, because
whenever we cross a point $\pointvar \in \pointsone$ in the positive direction,
either a forward match is beginning, or a backward match is ending;
either way, both $\postcumarcs$ and $\cumarcs$ increase by $1$.
(Whenever we encounter a point $\altpointvar \in \pointstwo$, both $\postcumarcs$ and $\cumarcs$ decrease by $1$.)
Therefore, the cost of the minimal matching can be computed \emph{without} producing one, by
computing $\cumarcs$ and substituting it in~\eqref{eq:matching cost on line}.

Now suppose that $\env$ is a circle instead of a line;
for example, the circle of unit radius $[0,2\pi)$
%(oriented \emph{counter-clockwise}),
with distance between points $\coordvar_1$ and $\coordvar_2$ defined by
$\min\{ |\coordvar_1-\coordvar_2|, 2\pi - |\coordvar_1-\coordvar_2| \}$.
Though the circular case is quite similar to the linear one,
the identity permutation is not necessarily minimal for sorted $\pointsone$ and $\pointstwo$, because
matches across the $\coordvar = 0$ boundary are now possible.
%now there are opportunities the period boundary conditions create opportunities to 
It is easy to argue, however, that
the minimal matching is among the $\numpoints$ circular shift permutations, which
leads immediately to an $O(\numpoints^2)$ minimal matching algorithm, e.g., the one in~\cite{werman1985distance}.
The $O(\numpoints^2)$ barrier was indeed broken in~\cite{kuhn1955hungarian}, where
the authors observed that $\postcumarcs(\coordvar) = \cumarcs(\coordvar) + \numarcs$
for some integer $\numarcs$.
Since $\cumarcs(0)=0$, then
$\numarcs = \postcumarcs(0)$, which means
$\numarcs$ can be thought of as the (signed) number of matches crossing $y=0$.
Now the minimal matching has cost
\begin{equation} \label{eq:circle cost}
	\cost(\numarcs) =
	\int_0^{2\pi} | \cumarcs(y) + \numarcs | \ dy.
\end{equation}
%
%
%($\numarcs$ is found later),
%the number of matches passing arbitrary $y$ in the positive direction is $\cumarcs(y) + \numarcs$.
%
Certaintly, the cost of the optimal matching can be no less than the minimum value of~\eqref{eq:circle cost}
% $\min_{ \numarcs \in \integers } \cost(\numarcs)$.
taken over integer $\numarcs$.
\cite{werman1986bipartite} provides
an $O(\numpoints \log \numpoints)$ algorithm
to compute such minimum
and an $O(\numpoints)$ follow-up procedure
to produce a matching of cost no greater than $\cost(\numarcs)$.
%for any $\numarcs \in \integers$.
%Moreover, it is demonstrated that each procedure runs in $O(\numpoints)$ time.

\section{Problem Statement} \label{sec:problem statement}

\subsection{Notation}

We use the following graph notation throughout the paper:
Let $(\nodeset,\arcset)$ denote a directed graph, or \emph{di-graph},
with vertex set $\nodeset$ and a set of directed edges $\arcset$.
In general, $(\nodeset,\arcset)$ might be a \emph{multi-} di-graph,
meaning there may be multiple distinct edges having the same endpoints.
For any edge $\arcvar \in \arcset$,
we let $\arcvar^-$ denote the \emph{tail} of $\arcvar$ and
let $\arcvar^+$ denote the \emph{head} of $\arcvar$.
For example, if $\arcvar=(u,v)$, then $\arcvar^- = u$ and $\arcvar^+ = v$.

\begin{definition}[Orientation]
An \emph{orientation} of an undirected graph $G$ is the assignment of a direction to each edge in $G$,
resulting in a directed graph.
\end{definition}

%\begin{definition}[Directed Acyclic Graph]
%\end{definition}
%
%Every undirected graph or multi-graph has an acyclic orientation.
% which is acyclic except possibly for a set of self-loops.

\begin{definition}[Topological ordering]
A \emph{topological ordering}
of a directed acyclic graph $G=(\nodeset,\arcset)$ is
a linear ordering $\leq$ of the vertex set $\nodeset$, so that
for every $\arcvar \in \arcset$ it holds $\arcvar^- \leq \arcvar^+$.
\end{definition}
It is know that any directed acyclic graph (DAG) has at least one topological ordering.
An algorithm to compute a topological ordering of a DAG in time $O(|\nodeset| + |\arcset|)$
can be found in~\cite[Sec.~22.4]{leiserson2001introduction}.
%
%\begin{remark}
%We observe that as long as a directed graph $G$ is acyclic except possibly for a set of loop edges,
%a topological ordering of the loop-less part of $G$ still satisfies the precedence constraints of $G$ full.
%\end{remark}

\subsection{Roadmaps}

\providecommand{\roadnet}{{\mathcal R}}

\providecommand{\roadverts}{{\bf V}}
\providecommand{\roadset}{{\bf R}}
\providecommand{\roadvar}{r}
\providecommand{\roadlen}{\length}

A roadmap can be described in terms of a set of lines or curves
%(perhaps embedded in Euclidean $\reals^2$)
connected together into a particular pattern by their endpoints;
the distance between points on a roadmap is the minimum distance by which
a particle could reach one point from the other
while constrained to travel on the curves, or \emph{roads}.

It is common practice, e.g., by modern postal services,
to represent the topology of a roadmap using an undirected weighted graph or multi-graph $(\roadverts,\roadset)$,
possibly with loops,
where the edges $\roadset$ correspond to roads in the roadmap and are labeled with \emph{lengths}, and
the vertices $\roadverts$ describe their interconnections.
Another common practice is to attach to such graph a coordinate system:
Given a fixed \emph{orientation} of the roadmap graph,
%and a road $\roadvar \in \roadset$,
every point on the roadmap continuum can be described unambiguously by a tuple, or \emph{address} $(\roadvar,y)$,
of a road $\roadvar \in \roadset$ and a real-valued coordinate $\coordvar$ between $0$ and the length $\roadlen_\roadvar$ of $\roadvar$.
%
%Such \emph{addresses} are a ubiquitous method in practice for describing points on real roadmaps.
There is an intuitive notion of ``roadmap distance'' between points described by such addresses,
arising from two basic assertions:
(i) there is a path between any two points on the same road, of length equal to
the difference between their address coordinates;
(ii) there is a special point for every roadmap vertex $u \in \roadverts$ which 
is at the respective endpoints of all the roads adjacent to $u$, simultaneously.
% simthere are $|\roadverts|$ points, one at each $u \in \roadverts$any ``extrema'' addresses which correspond to the same roadmap vertex---where
%$(\roadvar,0)$ corresponds to $\roadvar^-$ and
%$(\roadvar,\roadlen_\roadvar)$ corresponds to $\roadvar^+$---
%are all addresses for the same point.
%
The distance between points then is the length of the shortest path between them.

\subsection{Objective}

The objective of the paper is to obtain an algorithm for the bipartite matching problem on roadmaps which
for a given roadmap $\roadnet$
has worst-case runtime bounded by $O(\numpoints \log \numpoints)$, where 
$\numpoints$ is the number of points in each of $\pointsone$ and $\pointstwo$.

\begin{lemma}[Lemma~2 of~\cite{werman1986bipartite}]
$\Omega(\numpoints \log \numpoints)$ is a lower bound for the time it takes to find the minimal cost matching in both the linear and the circular cases.
\end{lemma}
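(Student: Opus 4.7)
The plan is to establish the lower bound via reduction from sorting, whose $\Omega(\numpoints \log \numpoints)$ lower bound in the algebraic decision tree model (and in the comparison model) is classical. Concretely, suppose we are given $\numpoints$ arbitrary real numbers $a_1,\ldots,a_\numpoints$ which we wish to sort. I would form the instance $\pointsone = \{a_1,\ldots,a_\numpoints\}$ and $\pointstwo = \{M+1, M+2, \ldots, M+\numpoints\}$ on the real line, where $M$ is chosen so that $M > \max_i a_i$; this makes $\pointstwo$ entirely to the right of $\pointsone$ and guarantees that the indices of $\pointstwo$ are already sorted.

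Next, I would invoke the fact (recalled at the start of Section~\ref{sec:lines and circles}) that on a line segment the unique minimal matching pairs the $k$-th smallest element of $\pointsone$ with the $k$-th smallest element of $\pointstwo$. Because the sorted order of $\pointstwo$ is known by construction, reading off the matching in $O(\numpoints)$ additional time yields a permutation that assigns each $a_i$ to its rank, i.e.\ fully sorts the input. Therefore any algorithm solving the minimal bipartite matching problem on a line in worst-case time $T(\numpoints)$ yields a sorting algorithm in time $T(\numpoints) + O(\numpoints)$, forcing $T(\numpoints) = \Omega(\numpoints \log \numpoints)$.

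For the circular case, I would simply embed the line construction inside a circle of sufficiently large circumference, placing all $2\numpoints$ points on a single arc whose length is strictly less than half the circumference. Then for every pair of points, the shortest arc distance coincides with the linear distance, so the minimal matching on the circle agrees with the minimal matching on the line, and the same reduction applies.

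The only real subtlety, and the step I expect to require the most care, is justifying the lower bound in a sufficiently strong computational model: the reduction immediately gives the bound in the algebraic decision tree / comparison model, which is the natural model for both sorting and for the matching algorithms considered in this paper. If one wished to rule out, say, bucket-sort-style tricks that bypass comparisons, one would instead reduce from element distinctness (given $a_1,\ldots,a_\numpoints$, decide whether they are pairwise distinct), which has an $\Omega(\numpoints \log \numpoints)$ lower bound in the algebraic decision tree model by Ben-Or's theorem; distinctness is detectable from the minimal matching cost because coincidences in $\pointsone$ produce characteristic zero-length substructures when matched against the shifted copy $\pointstwo$. Either variant suffices for the claim.
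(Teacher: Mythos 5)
Your main argument is correct but takes a genuinely different route from the paper. The paper's proof is a one-line sketch: reduce from the \emph{Set Equality} problem, following Lemma~2 of Werman et al.\ --- given two multisets $A$ and $B$ of $\numpoints$ reals, the minimal matching cost between them on a line is zero if and only if $A=B$, and Set Equality has an $\Omega(\numpoints\log\numpoints)$ lower bound in the algebraic decision tree model. You instead reduce from sorting: with $\pointstwo$ placed entirely to the right of $\pointsone$ and already in sorted order, the minimal matching hands you the rank of every $a_i$, so any matching algorithm sorts with $O(\numpoints)$ overhead. That is a valid reduction and arguably more elementary, but note what each buys: your reduction requires the algorithm to output the matching itself (a permutation), and its lower bound lives most naturally in the comparison model; the Set Equality reduction applies even to an algorithm that only reports the \emph{cost} of the minimal matching, and sits directly on a Ben-Or--style decision-tree bound. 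Your circle embedding (all points on an arc shorter than half the circumference) is fine.

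One concrete flaw in your fallback: reducing from element distinctness ``because coincidences produce zero-length substructures when matched against the shifted copy'' does not work as stated. If $\pointstwo$ is a translate of $\pointsone$ (or lies entirely to one side of it), all matches in the optimal solution are forward, and the minimal cost equals $\sum_k t_{(k)} - \sum_k s_{(k)}$, which is completely insensitive to repeated values in $\pointsone$. The decision problem that \emph{is} detectable from the cost is precisely Set Equality (cost zero iff the two multisets coincide), which is why the paper reduces from that problem rather than from distinctness. If you want robustness beyond the comparison model, adopt the Set Equality reduction rather than patching the distinctness one.
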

\begin{proof}[Proof Sketch]
The lemma can be proved by a reduction of the Set Equality problem, as in~\cite{werman1986bipartite}.
\end{proof}
$\Omega(\numpoints \log \numpoints)$ is clearly a lower bound
for the time to find a minimal cost matching on a fixed but arbitrary roadmap, because 
lines and circles are only two specific kinds of roadmaps.

%
%\begin{prop} \label{lemma:lower bound}
%\end{prop}
%%
%\begin{proof}
%The proposition is clearly true, because 
%\end{proof}
%%
%Proposition~\ref{lemma:lower bound} demonstrates that $O(\numpoints \log \numpoints)$ is the fastest possible runtime for a bipartite matching algorithm.
%

\section{Optimal Bipartite Matching on a Roadmap} \label{sec:roadmap matching}

\providecommand{\surplus}{b}	%\operatorname{surp}}
\providecommand{\costvar}{c}

\newcommand{\region}{{\mathcal Y}}
\newcommand{\costoffset}{\alpha}
\newcommand{\costslope}{\beta}

\providecommand{\lineset}{{\mathscr L}}

In this section, we generalize the analysis of~\cite{werman1986bipartite},
about the cost of matchings on lines and circles,
for the case of any fixed but arbitrarily complex roadmap $\roadnet$.
The cost of a match is equal to the 
\emph{roadmap distance} between the points, i.e.,
the length of the shortest path between them.
The result of our analysis will provide insight about the design of a novel minimal matching algorithm.

\subsection{Cost Characterization}

Since shortest paths are minimal, they are also simple (they do not ``cross themselves'').
We will attribute to every match $(\pointvar,\altpointvar)$ the orientation of its shortest path,
in the direction from $\pointvar$ to $\altpointvar$.
Note that a match need not have the same orientation on every road;
wherever its path follows a road $\roadvar$ in the positive direction,
we say it is forward;
wherever it follows a road in the opposite direction,
we say it is backward.
Without loss of generality, we assume that shortest paths are unique, e.g., by perturbation or arbitrary tie-breaking.

We will now re-define the quantities $\cumarcs$ and $\postcumarcs$ for roadmaps:
For every road $\roadvar \in \roadset$, $0 \leq \coordvar \leq \roadlen_\roadvar$,
let $\cumpoints_\pointsone(\coordvar;\roadvar)$ denote the number of points $(\roadvar,\coordvar_i)$
in $\pointsone$ which are on $\roadvar$
%on $\roadvar$ (i.e., $\pointsone \cap \roadvar$)
such that $\coordvar_i < \coordvar$, and
let $\cumpoints_\pointstwo(\coordvar;\roadvar)$ denote the number of points $(\roadvar,\coordvar_i)$
in $\pointstwo$ which are on $\roadvar$
%(i.e., $\pointstwo \cap \roadvar$)
such that $\coordvar_i < \coordvar$.
Let $\cumarcs(\coordvar;\roadvar) \doteq \cumpoints_\pointsone(\coordvar;\roadvar) - \cumpoints_\pointstwo(\coordvar;\roadvar)$.
Note that $\cumarcs(0;\roadvar) = 0$ for all $\roadvar \in \roadset$, but
now we have
$\cumarcs(\roadlen_\roadvar;\roadvar)
	= |\pointsone \cap \roadvar| - |\pointstwo \cap \roadvar|
	=: \surplus_\roadvar \neq 0$
in general.
Let $\surplus_\roadvar$ be called the \emph{surplus} of road $\roadvar$.

Given a minimal matching $\Match^*$,
let $\postcumarcs(\coordvar;\roadvar)$ denote
the number of forward matches crossing the coordinate $(\roadvar,\coordvar)$,
minus the number of backward matches crossing it.
%under the minimal matching $\Match^*$.
%
Using the same arguments in~\cite{werman1986bipartite}, it is easy to argue that 
$\postcumarcs(\coordvar;\roadvar) = \cumarcs(\coordvar;\roadvar) + \numarcs_\roadvar$
for some integer $\numarcs_\roadvar$;
there is one such integer for every road $\roadvar \in \roadset$, and
%Given an optimal matching and an integer $\numarcs$
%such that $\cumarcs(0) = \numarcs$
we can write the total cost of all the match fragments using road $\roadvar$ as
% fragments which lie on the road $\roadvar$ as
%
\begin{equation} \label{eq:cost form}
\cost(\numarcs_\roadvar;\roadvar)
	= \int_0^{\roadlen_\roadvar}
		\left| \cumarcs(\coordvar;\roadvar) + \numarcs_\roadvar \right| \ d\coordvar.
\end{equation}

\begin{figure}[h!]
\centering
\begin{tikzpicture}
\draw [->] (-4.000000,0) -- (4.500000,0) node [right] {$\coordvar$} ;
\draw [thick] (4.000000,-.15) -- (4.000000,.15) node [above] {$\roadlen_\roadvar$} ;
\draw [->] (-4.000000,-1.250000) -- (-4.000000,2.250000) node [above] {$\postcumarcs(y;\roadvar) = \cumarcs(y;\roadvar) + \numarcs_\roadvar$} ;
\draw (-3.900000,-1) -- (-4.100000,-1) node [left=5] {$-1$} ;
\draw (-3.900000,0) -- (-4.100000,0) node [left=5] {$0$} ;
\draw (-3.900000,1) -- (-4.100000,1) node [left=5] {$1$} ;
\draw (-3.900000,2) -- (-4.100000,2) node [left=5] {$2$} ;
\draw (-2.500000,0) node {$\times$} ;
\draw (-2.000000,0) node {$\circ$} ;
\draw (-1.000000,0) node {$\circ$} ;
\draw (1.000000,0) node {$\circ$} ;
\draw (2.000000,0) node {$\times$} ;
\draw (3.000000,0) node {$\circ$} ;
\draw [thick] (-4.000000,1.000000) -- node [above] {$\numarcs_\roadvar$} (-2.500000,1.000000) ;
\path [fill=black,opacity=.2] (-4.000000,0) rectangle (-2.500000,1.000000) ;
\draw [thick] (-2.500000,2.000000) --  (-2.000000,2.000000) ;
\path [fill=black,opacity=.2] (-2.500000,0) rectangle (-2.000000,2.000000) ;
\draw [thick] (-2.000000,1.000000) --  (-1.000000,1.000000) ;
\path [fill=black,opacity=.2] (-2.000000,0) rectangle (-1.000000,1.000000) ;
\draw [thick] (-1.000000,0.000000) --  (1.000000,0.000000) ;
\path [fill=black,opacity=.2] (-1.000000,0) rectangle (1.000000,0.000000) ;
\draw [thick] (1.000000,-1.000000) --  (2.000000,-1.000000) ;
\path [fill=black,opacity=.2] (1.000000,0) rectangle (2.000000,-1.000000) ;
\draw [thick] (2.000000,0.000000) --  (3.000000,0.000000) ;
\path [fill=black,opacity=.2] (2.000000,0) rectangle (3.000000,0.000000) ;
\draw [thick] (3.000000,-1.000000) --  (4.000000,-1.000000) ;
\path [fill=black,opacity=.2] (3.000000,0) rectangle (4.000000,-1.000000) ;
\draw (4.000000,-1.000000) node [right] {$\numarcs_\roadvar + \surplus_\roadvar$} ;
\end{tikzpicture}
\caption{
An example road, with the points $\pointsone$ denoted by `$\times$' and the points $\pointstwo$ denoted by `$\circ$'.
}
\end{figure}
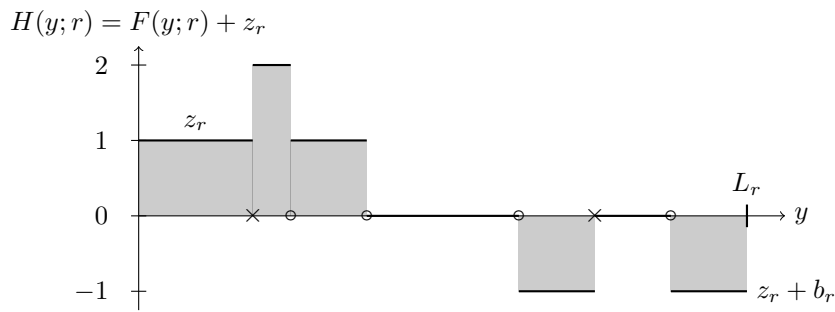

\begin{lemma}[Cost properties] \label{lemma:cost properties}
Letting $\cost$ be defined as in~\eqref{eq:cost form}
for each $\roadvar \in \roadset$ and
all $\numarcs \in \reals$ (and not just the integers),
$\cost$ is
(i) piece-wise linear,
(ii) convex, and
(iii) unbounded, and
(iv) has constant slope in every integer interval.
\end{lemma}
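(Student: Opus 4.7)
The plan is to reduce the integral defining $\cost(\cdot;\roadvar)$ to a finite non-negative combination of absolute-value functions whose kinks are at integer points, from which all four properties follow by elementary reasoning.

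First, I would exploit the step-function structure of $\cumarcs(\argholder\,;\roadvar)$. Since $\pointsone \cap \roadvar$ and $\pointstwo \cap \roadvar$ are finite, the integer-valued function $\coordvar \mapsto \cumarcs(\coordvar;\roadvar) = \cumpoints_\pointsone(\coordvar;\roadvar) - \cumpoints_\pointstwo(\coordvar;\roadvar)$ is a right-continuous step function that changes by $\pm 1$ exactly at the points of $(\pointsone \cup \pointstwo) \cap \roadvar$. Partition $[0,\roadlen_\roadvar]$ into the finitely many maximal sub-intervals $I_1,\ldots,I_K$ on which $\cumarcs(\argholder\,;\roadvar)$ is constant, taking some integer value $v_j$ on $I_j$, with $|I_j|>0$. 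Then~\eqref{eq:cost form} rewrites as
\[
\cost(\numarcs;\roadvar) \;=\; \sum_{j=1}^K |I_j|\,\bigl|\,v_j + \numarcs\,\bigr|.
\]

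Next, I would read off the four properties from this representation. Each summand $\numarcs \mapsto |I_j|\,|v_j+\numarcs|$ is a positively-scaled translate of the absolute value function, and is therefore piecewise linear and convex, with its single kink at the integer point $\numarcs = -v_j$. A finite non-negative combination of piecewise linear convex functions is piecewise linear and convex, giving (i) and (ii). For unboundedness (iii), each $|v_j+\numarcs| \to \infty$ as $|\numarcs|\to\infty$, and at least one positive coefficient $|I_j|$ exists because $\roadlen_\roadvar>0$; hence $\cost(\numarcs;\roadvar)\to\infty$ as $|\numarcs|\to\infty$.

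For (iv), the crucial observation is that \emph{every} kink of \emph{every} summand lies at an integer value of $\numarcs$, because each $v_j$ is an integer. Consequently, on any open integer interval $(k,k+1)$ with $k\in\integers$, each summand is affine with slope $\operatorname{sgn}(v_j+\numarcs)\cdot|I_j|$ (constant throughout the interval because $v_j+\numarcs$ has fixed sign there), and therefore $\cost(\argholder\,;\roadvar)$ is affine with constant slope
\[
\sum_{j=1}^K |I_j|\,\operatorname{sgn}(v_j + k + \tfrac{1}{2})
\]
on $(k,k+1)$, proving (iv).

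There is no real obstacle in this argument: once the step-function decomposition is in hand, the properties reduce to standard facts about sums of translated absolute values. The only subtlety worth being careful about is the finiteness of the partition $\{I_j\}$, which is immediate from the finiteness of $\pointsone$ and $\pointstwo$, and the integrality of the kink locations, which is what forces the slope to be constant precisely on integer intervals rather than on some coarser or finer partition.
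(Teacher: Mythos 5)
Your proof is correct, and it rests on the same key fact as the paper's proof---that $\cumarcs(\argholder\,;\roadvar)$ is a piecewise-constant, integer-valued step function with finitely many pieces---but it organizes the argument differently. You decompose the integral over the road coordinate $\coordvar$ into a finite sum $\sum_j |I_j|\,|v_j+\numarcs|$ of positively weighted, integer-translated absolute-value functions and then invoke standard closure properties of such sums; the paper instead partitions the domain of the \emph{variable} $\numarcs$ into the intervals determined by the distinct values of $\cumarcs$, and on each such interval explicitly computes the affine representation $\cost(\numarcs)=\costoffset_k+\numarcs\costslope_k$ with $\costslope_k=|\region^+_k|-|\region^-_k|$. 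These are dual views of the same object. Yours is arguably the quicker way to see (i)--(iv) at a glance (in particular, (iv) falls out immediately from the integrality of the kink locations $-v_j$); the paper's version earns its extra bookkeeping because the pairs $(\costoffset_k,\costslope_k)$ it produces are exactly the per-road line data reused later in the transcription step and in the $O(1)$ cost-evaluation oracle required by the capacity-scaling algorithm, and the identity $\costslope_1=\roadlen$, $\costslope_{m+1}=-\roadlen$ gives unboundedness in both directions explicitly. One cosmetic nit: your remark that $\cumarcs$ changes by exactly $\pm 1$ at each point fails when points coincide, but nothing in your argument depends on it---only the finiteness of the partition and the integrality of the values $v_j$ matter.
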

\begin{proof}
We prove the result for a single road $\roadvar$, omitting the $\roadvar$-specific notation.
%(In the proof, we omit the notation of $\roadvar$, which is implicit for a given road.)
Note that $\cumarcs$ is piece-wise constant and takes value only on the set of integers.
% $\integers$.
Let $\cumvals = \{ \cumval_1, \ldots, \cumval_m \} \subset \integers$ denote
the set of values taken by $\cumarcs$ (in increasing order), and
let $I_1,\ldots,I_{m+1}$ denote the intervals
$(-\infty,\cumval_1),(\cumval_1,\cumval_2),\ldots,$
	%$(\cumval_k,\cumval_{k+1}),\ldots,$
	$(\cumval_{m-1},\cumval_m),(\cumval_m,+\infty)$,
disjointly covering $\reals$.
%Let $I_1$ denote the interval $(-\infty,\cumval_1)$;
%let $I_k$ denote the interval $(\cumval_{k-1},\cumval_k)$ for $k=2,\ldots,m$;
%let $I_{m+1}$ denote the interval $(\cumval_m,+\infty)$.
%
$\cost(\numarcs)$ can be written as
\begin{equation} \label{eq:cost form no abs}
\cost(\numarcs) =
	\int_{ \cumarcs(y) + \numarcs > 0 } [ \cumarcs(y) + \numarcs ] \ dy
	- 	\int_{ \cumarcs(y) + \numarcs < 0 } [ \cumarcs(y) + \numarcs ] \ dy.
\end{equation}
For each $k=1,\ldots,m+1$ and all $\numarcs \in -I_k$ (i.e., $-\numarcs \in I_k$),
we have that
$\{ y : \ \cumarcs(y) + \numarcs > 0 \} \equiv \{ \cumarcs(y) > \cumval_{k-1} \} =: \region^+_k$ and
$\{ y : \ \cumarcs(y) + \numarcs < 0 \} \equiv \{ \cumarcs(y) \leq \cumval_{k-1} \} =: \region^-_k$ are both constant.
(Here, we let $\cumval_0 := -\infty$.)
Defining scalar constants
% $\costoffset_k$ and $\costslope_k$ by given by
\[
	\costoffset_k \doteq \int_{ \region^+_k } \cumarcs(y) \ dy - \int_{ \region^-_k } \cumarcs(y) \ dy,
\]
and
\begin{equation}
\label{eq:cost piece slope}
	\costslope_k \doteq |\region^+_k| - |\region^-_k|,
\end{equation}
then for all $\numarcs \in -I_k$,
\eqref{eq:cost form no abs} can be written as $\cost(\numarcs) = \costoffset_k + \numarcs \costslope_k$.

The prequel demonstrates that $\cost$ is piece-wise linear 
with constant slope within every integer interval.
To show that $\cost$ is convex, one can show that the slope of $\cost$ is non-decreasing,
e.g., by confirming that $\costslope_k$ is non-increasing in $k$.
Unboundedness can be proved by confirming that $\costslope_1 = \roadlen$ and $\costslope_{m+1} = -\roadlen$, where
it can be assumed that the road length $\roadlen > 0$.

%
%Let $\region^+(\numarcs) \doteq \{ y : \ \cumarcs(y) + \numarcs > 0 \}$, and
%let $\region^-(\numarcs) \doteq \{ y : \ \cumarcs(y) + \numarcs < 0 \}$.
%Then
%\begin{equation}
%\cost(\numarcs) =
%	\int_{ \region^+(\numarcs) } [ \cumarcs(y) + \numarcs ] \ dy
%	- 	\int_{ \region^-(\numarcs) } [ \cumarcs(y) + \numarcs ] \ dy.
%\end{equation}
%We can also write this as
%\begin{equation}
%\cost(\numarcs) =
%	\int_{ \region^+(\numarcs) } \cumarcs(y) \ dy
%		- \int_{ \region^-(\numarcs) } \cumarcs(y) \ dy
%	+ \numarcs \ \left[
%		| \region^+(\numarcs) | - | \region^-(\numarcs) |
%	\right].
%\end{equation}
%
\end{proof}

\subsection{Cost Bounds on Optimal Matchings}
\label{sec:reduction}

The next problem can be used to bound the cost of the minimal matching from below, and
we will eventually demonstrate that its solution can be used to produce a matching which achieves the bound.
\begin{myprob}[Minimum cost of matching]
\label{prob:matching cost lower bound}
\begin{align}
\underset{
	\Numarcs \in \reals^\roadset
	%\{ \numarcs_\roadvar \in \reals \}_{\roadvar \in \roadset}
}{\text{minimize}}
	&&& \cost(\Numarcs) = \sum_{\roadvar \in \roadset} \cost( \numarcs_\roadvar ; \roadvar )		
	\label{eq:cost of matching} \\
%	&&& \sum_{\roadvar \in \roadset} \costvar_\roadvar		\\	
	%
\text{subject to}
%	&&& \text{\eqref{eq:conservation}}.
	&&& \sum_\SetOf{ \roadvar \SuchThat \roadvar^+ = u } \numarcs_\roadvar + \surplus_\roadvar
	= \sum_\SetOf{ \roadvar \SuchThat \roadvar^- = u } \numarcs_\roadvar
	\qquad \text{for all vertices $u$}.
%	&&& \cost( \numarcs_\roadvar ; \roadvar ) \leq \costvar_\roadvar		& (\roadvar \in \roadset).
	\label{eq:conservation}
%\end{aligned}
\end{align}
\end{myprob}

\begin{lemma}
The minimal matching has cost bounded below by~\eqref{eq:cost of matching}
of the optimal solution of Problem~\ref{prob:matching cost lower bound}.
%of~\eqref{} over $\Numarcs \in \integers^\roadset$.
%integer $\{ \numarcs_\roadvar \}$.
\end{lemma}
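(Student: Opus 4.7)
The plan is to show that the vector $\Numarcs$ of ``net forward flows'' induced by any matching---in particular the minimal one $\Match^*$---is feasible for Problem~\ref{prob:matching cost lower bound} and attains an objective value that lower-bounds the matching's cost; the lemma then follows by minimizing over all feasible $\Numarcs$.

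For an arbitrary matching $\Match$, let $F(\coordvar;\roadvar)$ and $B(\coordvar;\roadvar)$ respectively count the forward and backward match fragments of $\Match$ crossing the coordinate $(\roadvar,\coordvar)$, and set $\postcumarcs(\coordvar;\roadvar) := F - B$. The same bookkeeping argument that the paper recalls for the line segment---namely that $\postcumarcs$ and $\cumarcs$ both jump by $+1$ at each point of $\pointsone$ on $\roadvar$ and by $-1$ at each point of $\pointstwo$---shows that $\numarcs_\roadvar := \postcumarcs(\coordvar;\roadvar) - \cumarcs(\coordvar;\roadvar)$ does not depend on $\coordvar$, producing one integer per road.

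To verify feasibility, I would assume (without loss of generality, by perturbation or tie-breaking) that no point of $\pointsone \cup \pointstwo$ lies exactly at a vertex, and then check flow conservation at each $u$: every shortest-path passing through $u$ enters once and leaves once, so summing signed crossings over the roads incident to $u$ yields $\sum_{\roadvar^+ = u} \postcumarcs(\roadlen_\roadvar;\roadvar) = \sum_{\roadvar^- = u} \postcumarcs(0;\roadvar)$. Substituting $\postcumarcs(\roadlen_\roadvar;\roadvar) = \numarcs_\roadvar + \surplus_\roadvar$ and $\postcumarcs(0;\roadvar) = \numarcs_\roadvar$ gives exactly~\eqref{eq:conservation}, so $\Numarcs$ is feasible.

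For the lower bound on cost, the total length of match fragments of $\Match$ on road $\roadvar$ equals $\int_0^{\roadlen_\roadvar} [F(\coordvar;\roadvar) + B(\coordvar;\roadvar)]\,d\coordvar$, which is at least $\int_0^{\roadlen_\roadvar} |F - B|\,d\coordvar = \cost(\numarcs_\roadvar;\roadvar)$. Summing over $\roadvar \in \roadset$ yields $\text{cost}(\Match) \geq \cost(\Numarcs) \geq$ the optimum of Problem~\ref{prob:matching cost lower bound}; taking $\Match = \Match^*$ proves the lemma. The main subtlety lies in the vertex conservation step: the orientations of roads meeting at $u$ must be handled consistently (some with $\roadvar^+ = u$, others with $\roadvar^- = u$, contributing $\postcumarcs$ at opposite endpoints) and points lying exactly at vertices need separate treatment; both issues are resolved cleanly by the small-perturbation convention, while the inequality $F + B \geq |F - B|$ takes care of the fact that $\Match$ need not be unidirectional on every road.
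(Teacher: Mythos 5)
Your proof is correct and follows essentially the same route as the paper's: associate to any matching the per-road integers $\numarcs_\roadvar = \postcumarcs(\argholder;\roadvar) - \cumarcs(\argholder;\roadvar)$, verify that the resulting $\Numarcs$ satisfies the conservation constraints~\eqref{eq:conservation}, and conclude that the optimum of Problem~\ref{prob:matching cost lower bound} lower-bounds the matching cost. The one refinement you add is obtaining the cost comparison via $F+B \geq |F-B|$ for an \emph{arbitrary} matching, whereas the paper asserts exact equality of the minimal matching's cost with~\eqref{eq:cost of matching} (implicitly invoking the unidirectionality argument of Lemma~\ref{lemma:minimal is unidirectional}); your inequality makes the step self-contained without that appeal.
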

\begin{proof}
%Given a vector $\Numarcs \in \reals^\roadset$,
%of $\numarcs_\roadvar$ for each $\roadvar \in \roadset$
%as determined by the 
Given a matching $\Match$,
one can compute
$\numarcs_\roadvar := \postcumarcs( \argholder ; \roadvar ) - \cumarcs( \argholder ; \roadvar )$
for each $\roadvar \in \roadset$.
If $\Numarcs \in \integers^\roadset$ is the vector composed of such $\numarcs_\roadvar$, then
the total cost of $\Match^*$ is precisely~\eqref{eq:cost of matching}.
%\[
%	\sum_{\roadvar \in \roadset} \cost( \numarcs_\roadvar ; \roadvar ).
%\]
%
For each $u \in \roadverts$,
the left hand side of~\eqref{eq:conservation} is equal to the number of matches entering $u$, and
the right hand side is equal to the number of matches leaving $u$ (both \emph{signed} counts).
Regular conservation arguments imply that~\eqref{eq:conservation} must hold for any feasible matching.
% must satisfy~\eqref{eq:conservation}.
%
Therefore, the feasible set of Problem~\ref{prob:matching cost lower bound} contains all realizable $\Numarcs$, obtaining the lemma.
%and the bound holds.
\end{proof}
%These properties guarantee that the relaxed non-integer linear version of~\eqref{} has integer solutions.

The previous formulation can be extended quite easily to incorporate the notion of ``one-way'' roads:
For example, if a road $\roadvar \in \roadset$ admits only forward matches,
that condition can be encoded as $\postcumarcs( \coordvar ; \roadvar ) \geq 0$
for all $\coordvar \in (0,\roadlen_\roadvar)$, i.e.,
$\min_\coordvar \cumarcs(\coordvar;\roadvar) + \numarcs_\roadvar \geq 0$.
%
%Networks with flow lower bounds of such type can be tranformed
%into networks without lower bounds
%using standard techniques.
%
Note therefore that the framework can support one-way and bi-directional roads simultaneously.

\begin{lemma}[Integral solutions] \label{lemma:integral solutions}
The optimization problem Problem~\ref{prob:matching cost lower bound} has integer optimal solutions.
%even when the feasible set is relaxed to $\reals^\roadset$.
\end{lemma}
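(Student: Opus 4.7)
The plan is to recognize Problem~\ref{prob:matching cost lower bound} as a minimum convex cost flow problem on the directed multi-graph $(\roadverts,\roadset)$ and to invoke (and inline-sketch) the classical integrality property for such problems, cf.~\cite[Sec.~14]{ahuja1993network}. The conservation constraints~\eqref{eq:conservation} are node-balance equations with the node-arc incidence matrix of $(\roadverts,\roadset)$, which is totally unimodular; the surpluses $\surplus_\roadvar$ are integer; and by Lemma~\ref{lemma:cost properties} each arc cost $\cost(\,\cdot\,;\roadvar)$ is convex, piecewise linear, and has all breakpoints at integers---exactly the structure that makes the integrality property go through.

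The concrete strategy would be to reduce Problem~\ref{prob:matching cost lower bound} to a \emph{linear}-cost flow LP on an auxiliary multi-graph: for each road $\roadvar$ and each unit-length integer interval on which $\cost(\,\cdot\,;\roadvar)$ has constant slope, attach a parallel unit-capacity arc between $\roadvar^-$ and $\roadvar^+$ whose linear cost equals that slope. The auxiliary LP inherits a totally unimodular constraint matrix, integer supplies, and integer capacities, so it admits an integer optimal vertex. Because the slopes are non-decreasing by convexity (Lemma~\ref{lemma:cost properties}), any such optimum uses the cheaper arcs of each bundle before any expensive one, so the bundle's total signed flow collapses into a single integer value of $\numarcs_\roadvar$ realizing the same objective as in~\eqref{eq:cost of matching}.

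The main technical obstacle I anticipate is the unrestricted sign of $\numarcs_\roadvar$: unlike textbook minimum cost flow, our variables need not be non-negative. I would handle this by splitting $\numarcs_\roadvar = \numarcs_\roadvar^{+} - \numarcs_\roadvar^{-}$ with both parts non-negative and equipping each with its own unit-arc bundle, one capturing the positive-$\numarcs$ branch of $\cost(\,\cdot\,;\roadvar)$ and the other the negative-$\numarcs$ branch; convexity then forbids simultaneous activation of both bundles in any optimum. The bounded window over which the bundles must be built (so that only finitely many unit arcs are introduced) comes from the unboundedness clause of Lemma~\ref{lemma:cost properties} together with the integrality of the right-hand side in~\eqref{eq:conservation}, which together confine any optimal $\numarcs_\roadvar$ to a bounded range of integer magnitude.
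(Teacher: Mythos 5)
Your proposal is correct, but it takes a genuinely different route from the paper. The paper gives a direct, self-contained perturbation argument: starting from a fractional optimum $\Numarcs$, it uses the integrality of the coefficients in~\eqref{eq:conservation} to extract a cycle of roads with fractional components, observes via Lemma~\ref{lemma:cost properties} that the objective is \emph{affine} on the open unit hyper-cube containing those components (so the gradient there is constant, hence zero at an optimum), and then slides along the cycle direction until hitting the hyper-cube boundary, strictly reducing the number of fractional components without changing the cost; iterating yields an integer optimum. Your argument is instead the classical global reduction: replace each convex, piecewise-linear arc cost (with integer breakpoints, again by Lemma~\ref{lemma:cost properties}) by a bundle of unit-capacity parallel arcs carrying the successive slopes, split the free-signed $\numarcs_\roadvar$ into two non-negative bundles, and invoke total unimodularity of the node--arc incidence matrix with integer supplies $\surplus$ and capacities. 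Both are sound. What the paper's route buys is economy: no auxiliary network, no truncation window, no sign splitting, and it reuses exactly the structural facts already proved in Lemma~\ref{lemma:cost properties}. What your route buys is the connection to standard network-flow integrality theory, and it is essentially the justification underlying the capacity-scaling algorithm of~\cite[Sec.~14.5]{ahuja1993network} that the paper later applies. Two small points to tighten if you pursue your version: (i) the two bundles should be anchored so that their union reproduces $\cost(\argholder;\roadvar)$ exactly on the whole truncation window (arcs with negative linear cost are then unavoidable and harmless); and (ii) convexity does not \emph{forbid} simultaneous activation of both bundles at an optimum, it only shows that cancelling a unit from each never increases cost (via $\cost(a;\roadvar)+\cost(-b;\roadvar)\geq\cost(a-b;\roadvar)+\cost(0;\roadvar)$ for $a,b\geq 0$), which suffices to recover a consistent integer $\Numarcs$ but should be stated as an exchange argument rather than an exclusion.
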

The significance of Lemma~\ref{lemma:integral solutions} is that we may hope to avoid the complexity
associated with integer optimization problems
by linear relaxation.
\begin{proof}
Suppose that $\Numarcs$ is a fractional optimal solution to~\eqref{prob:matching cost lower bound}, and
let $\roadvar_1$ be a road with non-integer component, i.e., $\numarcs_{\roadvar_1} \notin \integers$.
Note that since all the coefficients of~\eqref{eq:conservation} are integer,
then among the roads which share endpoint $\roadvar_1^+$ with $\roadvar_1$
(including itself if $\roadvar_1^- = \roadvar_1^+$),
at least one must also be fractional.
Using this argument, one can identify a cycle $(\roadvar_1,\ldots,\roadvar_K)$ of \emph{only} roads with fractional components.
For each $k=1,\ldots,K$, let us denote
by $I_k \subset \reals$ the unit interval $(\lfloor \numarcs_{\roadvar_k} \rfloor, \lceil \numarcs_{\roadvar_k} \rceil)$,
and let $\Pi$ denote the unit hyper-cube $\prod_k I_k$.
%by $I_{\roadvar'}$ the interval $(\lfloor \numarcs_{\roadvar'} \rfloor, \lceil \numarcs_{\roadvar'} \rceil)$, and
%by $\Pi$ the unit square $I_\roadvar \times I_{\roadvar'}$.
%	\times (\lfloor \numarcs_{\roadvar'} \rfloor, \lceil \numarcs_{\roadvar'} \rceil)$ and
Note that $(\numarcs_{\roadvar_1},\ldots,\numarcs_{\roadvar_K})$ is in the \emph{interior} of $\Pi$.
In the interior of $\Pi$, we have a gradient
% the gradient
\[
	\gradient_{ (\numarcs_{\roadvar_1},\ldots,\numarcs_{\roadvar_K}) } \cost(\Numarcs)
	= \begin{bmatrix}
	\left.
         %\frac{d}{d\numarcs_{\roadvar_k}}
         \frac{d}{d\numarcs}
         \cost( \numarcs; \roadvar_1 )
      \right|_{\numarcs_{\roadvar_1}}
      &
   \ldots
      &
	\left.
         %\frac{d}{d\numarcs_{\roadvar_k}}
         \frac{d}{d\numarcs}
         \cost( \numarcs; \roadvar_K )
      \right|_{\numarcs_{\roadvar_K}}
	\end{bmatrix}.
\]

Applying Lemma~\ref{lemma:cost properties}, one can argue that such gradient is
%$$ of the objective
%%	[ \cost(\numarcs_\roadvar;\roadvar) + \cost(\numarcs_{\roadvar'} ; \roadvar') ]$
%is 
constant over $\Pi$.
%$(\numarcs_\roadvar,\numarcs_{\roadvar'}) \in \interior(\Pi)$.
In particular, the gradient must be $0$, since $\Numarcs$ is optimal.
Let $g \in \reals^{\roadset}$ denote the vector which is
$+1$ for each road traversed in the positive direction by the cycle,
$-1$ for each road traversed in the reverse direct,
and $0$ for all roads not in the cycle.
(It is easy to show that $\Numarcs + \alpha g$ satisfies~\eqref{eq:conservation} for all $\alpha \in \reals$.)
One can find $\alpha \in \reals$ such that $\Numarcs + \alpha g =: \Numarcs^+$ lies on the boundary of $\Pi$.
$\Numarcs^+$ is also feasible optimal, and has at least one less fractional component than $\Numarcs$.
Such procedure can be repeated until an integral optimal solution is obtained.
\end{proof}

\subsubsection{Solving Problem~\ref{prob:matching cost lower bound}}
\label{sec:solving convex cost flow problem}

Since by Lemma~\ref{lemma:cost properties} the objective functions $\cost(\argholder;\roadvar)$ are all convex,
Problem~\ref{prob:matching cost lower bound} is a so-called [minimum] \emph{convex cost flow problem}~\cite[Ch.~14]{ahuja1993network}.
The convex cost flow problem (CCFP) is a generalization of the minimum [linear] cost flow problem, such that
the edge costs needn't be linear.
Ahuja et. al. provide a \emph{capacity-scaling} algorithm
for CCFPs like Problem~\ref{prob:matching cost lower bound} which have integral solutions~\cite[Sec.~14.5]{ahuja1993network}.
Provided that the objective functions can be evaluated in $O(1)$ time,
the algorithm obtains runtime $O( ( \numedges \log U ) S(\numnodes,\numedges,C) )$, where
$\numnodes$ and $\numedges$ are the number of vertices and edges in the network, respectively,
$U$ and $C$ are bounds on the total supply and cost, respectively, and
$S(\numnodes,\numedges,C)$ is the time to solve the shortest path problem on such a network~\cite[Ch.~4]{ahuja1993network}.
Choosing, for example, the strongly polynomial $O(\numedges + \numnodes \log \numnodes)$
Fibonacci heap shortest-path algorithm of Fredman and Tarjan~\cite{fredman1987fibonacci}, and
observing that $U = \numpoints$ is an acceptable bound,
one can solve Problem~\ref{prob:matching cost lower bound} in
$O( \numedges \log M ( \numedges + \numnodes \log \numnodes ) )$ time.
Moreover, the algorithm always provides integer solutions.

\subsection{Obtaining an Optimal Roadmap Matching}
\label{sec:construction}

\providecommand{\edgelen}{l}
\providecommand{\postcumval}{h}

%by relaxation to a LP, and applying the ellipsoid method to solve it.
In this section, we provide an algorithm for the construction of an optimal matching given the vector $\Numarcs^*$.
% two processes in reverse order.
The algorithm generalizes, in a fairly clean way, e.g., the procedure given in~\cite{werman1986bipartite}.
%to construct the matching .

%\subsubsection{Constructing a Matching From $\Numarcs^*$} \label{sec:final construction}

Given a solution $\Numarcs$ to Problem~\ref{prob:matching cost lower bound},
the following procedure will obtain a matching with cost less than or equal to $\cost(\Numarcs)$:
The procedure is in two steps.
During the first step, one obtains an intermediate data structure, of
an \emph{integer}-weighted digraph $G(\Numarcs)$
on vertex set $\pointsone \cup \pointstwo \cup \roadverts$.
In such graph, every edge corresponds to an empty interval on the roadmap
(generally, one spanning the space between two adjacent points);
thus, we will call $G(\Numarcs)$ the \emph{interval graph}.
In the second step, to obtain the matching itself,
we run a simple graph traversal algorithm (Algorithm~\ref{alg:construct}) on $G(\Numarcs)$.

The edges of the interval graph are directed, and each is labeled with a positive integer ${weight}$,
indicating the number of matches which cross the corresponding interval.
%The graph only represents intervals which are traversed by matches.
For example, let $(\coordvar_1,\coordvar_2,\ldots,\coordvar_K)$ be the ordering
of the coordinates of points in $\pointsone \cup \pointstwo$ which are on some road $\roadvar$.
%in $\streettree_\roadvar$, and
Let $Y_1,Y_2,\ldots,Y_{K+1}$ denote the set of intervals
$(0,\coordvar_1)$,$(\coordvar_1,\coordvar_2)$,$\ldots$,
$(\coordvar_{K-1},\coordvar_K)$,$(\coordvar_K,\roadlen_\roadvar)$, and
let $\cumval_k$ denote the constant value taken by $\cumarcs(\cdot;\roadvar)$ per interval $Y_k$.
(Let $\edgelen_k$ denote the interval length.)
Note that $\cumval_k + \numarcs_\roadvar =: \postcumval_k$ is the signed number of matches traversing $Y_k$ under $\Numarcs$.
If $\postcumval_k > 0$ then $G(\Numarcs)$ contains the edge
from the earlier endpoint $(\roadvar,\coordvar_{k-1})$ to the later endpoint $(\roadvar,\coordvar_k)$;
%$( (\roadvar,\coordvar_k), (\roadvar,\coordvar_{k+1}) )$ to $G(\Numarcs)$;
if $\postcumval_k < 0$, then it has the reverse edge;
if $\postcumval_k = 0$, then there is no edge.
If the edge is contained, then it has ${weight}$ $|\postcumval_k|$.
%Label the edge with \emph{length} $\edgelen_k$ and \emph{weight} $|\postcumval_k|$.
%(If $\postcumval_k = 0$, we do not obtain an edge.)
Note that the endpoints $(\roadvar,0)$ and $(\roadvar,\roadlen_\roadvar)$,
of intervals $Y_1$ and $Y_K$, respectively,
are not points in $\pointsone \cup \pointstwo$.
They are substituted in $G(\Numarcs)$ by the roadmap vertices $\roadvar^- \in \roadverts$ and $\roadvar^+ \in \roadverts$, respectively,
to which they correspond.
\begin{remark}
\label{remark:graphweight}
%Note that
$\cost(\Numarcs)$ is equal to
the sum over all edges in $G(\Numarcs)$ of $|\postcumval| \times \edgelen$.
\end{remark}

Algorithm~\ref{alg:construct} (below) produces a matching by visiting every vertex in $G(\Numarcs)$,
among which are all of $\pointsone \cup \pointstwo$.
When the current vertex $i$ is a point in $\pointsone$, then it is ``collected'' for future use.
When $i \in \pointstwo$, then it is matched to a point $j$ previously collected.
To ensure that there are always enough points collected for future matches, $G(\Numarcs)$ is traversed in a topological order.
%
% produced by Algorithm~\ref{alg:matchgraph}.
\providecommand{\toposeq}{\sigma}
\begin{algorithm}[h!]
\caption{ConstructMatching}
\label{alg:construct}
\begin{algorithmic}[1]
\renewcommand{\algorithmicrequire}{\textbf{Input:}}
\renewcommand{\algorithmicensure}{\textbf{Output:}}
\renewcommand{\algorithmiccomment}[1]{ // \emph{#1} }
\REQUIRE an interval di-graph $G$
(e.g., generated by inputs $\roadnet$, $\pointsone$, $\pointstwo$, and $\Numarcs$)
\ENSURE a bipartite matching $\Match$ between $\pointsone$ and $\pointstwo$

\ensurecommand{\currnode}{v}
\ensurecommand{\somenode}{u}
\ensurecommand{\nextnode}{w}

\STATE {\bf initialize}:
$\Match \gets$ an empty matching

\STATE {\bf initialize}:
Associate
with each vertex $\currnode \in \pointsone \cup \pointstwo \cup \roadverts$
an empty set $L_\currnode$.

\STATE Choose any topological ordering
of $\pointsone \cup \pointstwo \cup \roadverts$
under $G(\Numarcs)$.
Enumerate $\pointsone \cup \pointstwo \cup \roadverts$ in this order:

\FORALL{vertices $\currnode$}
\STATE \label{line:programstart}
If $\currnode \in \pointsone$, then add $\currnode$ to $L_\currnode$.
Otherwise, if $\currnode \in \pointstwo$ then remove some point $\somenode$ from $L_\currnode$, and
insert the match $(\somenode,\currnode)$ into $\Match$.
(Because the vertices are enumerated in a topological order,
if $\currnode \in \pointstwo$ then it must be true $|L_\currnode| > 0$.)
If $\currnode \in \roadverts$, we do not alter $L_\currnode$, nor create a match. 

\STATE \label{line:programend}
%Regardless of whether or not $L_u$ was altered, 
For every edge $(\currnode,\nextnode)$ leaving $\currnode$,
move $weight(\currnode,\nextnode)$ elements from $L_\currnode$ into $L_\nextnode$.
\ENDFOR
\RETURN $\Match$
\end{algorithmic}
\end{algorithm}

\begin{lemma}
The cost of the matching produced by Algorithm~\ref{alg:construct} on interval graph $G(\Numarcs)$ has cost no greater than $\cost(\Numarcs)$.
\end{lemma}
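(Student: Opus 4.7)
The plan is to bound the cost of the output matching by accounting for the total edge ``traffic'' in the interval graph $G(\Numarcs)$ and invoking Remark~\ref{remark:graphweight}, which identifies $\cost(\Numarcs)$ with $\sum_{e} w_e \cdot \edgelen_e$, where $w_e = |\postcumval_e|$ and $\edgelen_e$ are the weight and underlying interval length of edge $e$.

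First I would verify that Algorithm~\ref{alg:construct} is well defined: whenever a vertex $v \in \pointstwo$ is processed, $L_v$ is nonempty. This follows from a local conservation identity at $v$---by the definition of $G(\Numarcs)$, the sum of weights of edges entering $v$ plus $\ones[v \in \pointsone]$ equals the sum of weights of edges leaving $v$ plus $\ones[v \in \pointstwo]$, which in turn comes from~\eqref{eq:conservation} together with how the $\postcumval_k$ jump across a point. Because vertices are visited in a topological order, every edge entering $v$ has already deposited its tokens in $L_v$ before $v$ is touched, so $|L_v|$ is at least the number of tokens that $v$ must dispense (including the delivery, if any), hence at least $1$ when $v \in \pointstwo$.

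Next I would trace the provenance of each match. Every element $s \in \pointsone$ added to $L_s$ is moved, step by step via Line~\ref{line:programend}, along outgoing edges in $G(\Numarcs)$ until it is consumed at some $t \in \pointstwo$, producing the match $(s,t)$. The ordered sequence of edges traversed by this token is a directed walk $W_{s,t}$ from $s$ to $t$ in $G(\Numarcs)$. Because every edge of $G(\Numarcs)$ corresponds to a road sub-interval traversed with a specified orientation, $W_{s,t}$ traces an admissible roadmap path from $s$ to $t$, so its total length upper bounds the shortest-path distance $d(s,t)$. Exchanging the order of summation,
\[
\sum_{(s,t)\in\Match} d(s,t)
\;\le\; \sum_{(s,t)\in\Match}\ \sum_{e \in W_{s,t}} \edgelen_e
\;=\; \sum_{e} \edgelen_e \cdot |\{(s,t)\in\Match : e \in W_{s,t}\}|.
\]
Since Line~\ref{line:programend} moves exactly $w_e$ tokens across $e$, and each such token passes through $e$ at most once (thanks to the topological order), the cardinality on the right is exactly $w_e$. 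Hence the right-hand side equals $\sum_e w_e \cdot \edgelen_e = \cost(\Numarcs)$ by Remark~\ref{remark:graphweight}.

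The main obstacle is the token-tracing bookkeeping in the second step: one must formalize the bijection between matches and directed walks in $G(\Numarcs)$ so that each edge $e$ appears on precisely $w_e$ of the walks. Topological order keeps this clean---every token visits each vertex at most once, all inflow is resolved before outflow is dispensed, and the conservation identities derived from~\eqref{eq:conservation} ensure the counts match exactly, so no detailed ``per-token'' scheduling is required beyond this invariant.
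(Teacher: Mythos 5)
Your proposal is correct and follows essentially the same route as the paper's own proof: each point of $\pointsone$ is traced along a directed walk to its match, the walk lengths upper-bound the match distances, each edge is crossed by exactly its weight's worth of walks, and Remark~\ref{remark:graphweight} closes the argument. The additional detail you supply (the local conservation identity guaranteeing $|L_v|>0$ at vertices of $\pointstwo$, and the explicit exchange of summation) merely makes rigorous what the paper states more informally.
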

\begin{proof}
During the execution of Algorithm~\ref{alg:construct},
every point $\pointvar \in \pointsone$
traverses a path in $G(\Numarcs)$ to its match $\Match(\pointvar) \in \pointstwo$.
The cost of the matching produced cannot be greater than the total length of all such paths.
By design, every edge in $G(\Numarcs)$ is traversed
under Algorithm~\ref{alg:construct}
by a total number of paths
%---%
%i.e., for edge $(u,v)$, the number of points $\pointvar \in \pointsone$ moved from $L_u$ to $L_v$---%
%m in $G(\Numarcs)$ 
equal to its weight.
%to the edge
%%%%
The total length of all paths is equal therefore to the sum over the edges in $G(\Numarcs)$
of their weight times length.
%of the product of their weight and length.
That sum is equal to $\cost(\Numarcs)$; recall, e.g., Remark~\ref{remark:graphweight}.
\end{proof}
Algorithm~\ref{alg:construct} has one technical caveat:
It assumes that a topological ordering exists under $G(\Numarcs)$.
This is guaranteed as long as $G(\Numarcs)$ is a DAG, since
every DAG has at least one topological ordering.
\begin{lemma}
Let $\Numarcs^*$ be an \emph{optimal} solution to Problem~\ref{prob:matching cost lower bound}.
$G(\Numarcs^*)$ is a DAG.
\end{lemma}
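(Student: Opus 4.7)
The plan is a cycle-cancellation argument by contradiction. By Lemma~\ref{lemma:integral solutions} I may assume without loss of generality that $\Numarcs^*$ is integer-valued. If $G(\Numarcs^*)$ contained a directed cycle I would then exhibit a feasible $\Numarcs^* + g$ whose objective~\eqref{eq:cost of matching} is strictly smaller than $\cost(\Numarcs^*)$, contradicting optimality.

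Pick any simple directed cycle $C$ in $G(\Numarcs^*)$. The crucial structural step is to argue that for every road $\roadvar$ that $C$ touches, every interval of $\roadvar$ has $\postcumval$ of the same sign, so $C$ in fact traverses $\roadvar$ end-to-end in a single direction. Since distinct roads meet only at elements of $\roadverts$, every maximal run of consecutive edges of $C$ that lies on $\roadvar$ must begin and end at $\roadvar^-$ or $\roadvar^+$. The restriction of $G(\Numarcs^*)$ to $\roadvar$ is a linear arrangement whose vertices are $\roadvar^-$, the points of $\pointsone \cup \pointstwo$ on $\roadvar$ (in coordinate order), and $\roadvar^+$, with consecutive vertices joined by a single directed edge exactly when the corresponding $\postcumval_k$ is nonzero. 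Because $C$ is simple, such a run is a simple directed path between the endpoints of this arrangement; a short induction along the arrangement shows that any such path must visit every intermediate vertex in order, which forces every interval of $\roadvar$ to be oriented in the direction of the traversal. Two consequences follow: $C$ uses each road at most once, and the condensation of $C$ is a simple directed cycle $C'$ in the roadmap multigraph $(\roadverts, \roadset)$.

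Now define $g \in \{-1, 0, +1\}^\roadset$ by $g_\roadvar = -1$ if $C'$ uses $\roadvar$ in its positive direction, $g_\roadvar = +1$ if in its negative direction, and $g_\roadvar = 0$ otherwise. Feasibility of $\Numarcs^* + g$ with respect to~\eqref{eq:conservation} is a cycle-bookkeeping check: at each $u \in C'$ the two cycle-edges incident to $u$ contribute equal amounts to the two sides of~\eqref{eq:conservation}, and at every other vertex all adjacent roads satisfy $g_\roadvar = 0$. For the cost, on a road $\roadvar \in C'$ traversed positively we have $\postcumval_k \geq 1$ on every interval, so replacing $\numarcs^*_\roadvar$ with $\numarcs^*_\roadvar - 1$ keeps the integrand in~\eqref{eq:cost form} non-negative and reduces it pointwise by $1$, lowering $\cost(\cdot;\roadvar)$ by exactly $\roadlen_\roadvar$; a symmetric calculation handles the negative case. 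Summing yields $\cost(\Numarcs^* + g) = \cost(\Numarcs^*) - \sum_{\roadvar \in C'} \roadlen_\roadvar < \cost(\Numarcs^*)$, the desired contradiction. The substantive obstacle is the orientation claim for simple cycles in $G(\Numarcs^*)$; once that is in hand, the feasibility check and the cost decrease are both one-line verifications.
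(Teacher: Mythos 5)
Your proof is correct and follows essentially the same cycle-cancellation argument as the paper: a directed cycle in $G(\Numarcs^*)$ induces a cycle of roads, and perturbing $\Numarcs^*$ by the corresponding $\pm 1$ circulation preserves feasibility of~\eqref{eq:conservation} while strictly decreasing the cost. The only difference is that you spell out the step the paper dismisses as ``easy to argue''---namely that a directed cycle must traverse each road it touches end-to-end with every interval weight of one sign, which is exactly what makes the weight decrement, and hence the cost decrease of $\roadlen_\roadvar$ per road, uniform along the cycle.
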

\begin{proof}
The proof is by contradiction.
Suppose $\Numarcs^*$ is optimal, but $G(\Numarcs^*)$ has a directed cycle
$\cyclevar = (\arcvar_1,\ldots,\arcvar_K)$.
$\cyclevar$ corresponds to a cycle
$\cyclevar' = (\roadvar_1,\ldots,\roadvar_{K'})$ of the roads in $\roadset$.
Let $g \in \reals^\roadset$ denote the vector which is
$+1$ for each road traversed in the positive direction by $\cyclevar'$,
$-1$ for each road traversed in the reverse direction,
and $0$ for all roads not in $\cyclevar'$.
$\Numarcs^*-g$ is feasible because it satisfies~\eqref{eq:conservation}.
It is easy to argue that we can obtain $G(\Numarcs^*-g)$ from $G(\Numarcs^*)$
by substracting $1$ from the weight on every edge in $\cyclevar$.
Doing so strictly decreases the total weight of the edges in $G$,
ultimately decreasing $\cost$, and thereby
contradicting the optimality of $\Numarcs^*$.
\end{proof}

In fact, $G(\Numarcs^*)$ is a special kind of DAG:
\begin{prop}
$G(\Numarcs^*)$ is a \emph{multi-tree}, i.e., a directed, acyclic graph in which there is at most \emph{one} directed path between any two vertices.
\end{prop}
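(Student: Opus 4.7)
The plan is to argue by contradiction, extending the perturbation technique used in the DAG lemma just above. Suppose $G(\Numarcs^*)$ is not a multi-tree, so it admits two distinct directed paths between some pair of vertices. I would first apply the standard ``lozenge'' reduction---take the first point of divergence and the first subsequent re-convergence of the two paths---to obtain two \emph{internally vertex-disjoint} directed paths $P_1$ and $P_2$ between some $u'$ and $v'$.

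The crux is a structural observation about interior vertices. Every $\pointvar \in \pointsone \cup \pointstwo$ has unweighted degree at most two in $G(\Numarcs^*)$, because its only incident interval-graph edges are the two intervals of the single road passing through $\pointvar$. Hence the branching point $u'$ and the merging point $v'$ must both lie in $\roadverts$, and moreover every road entered by $P_1$ or $P_2$ must be traversed fully from one $\roadverts$-endpoint to the other (at interior vertices no branching to a different road is available). Combined with internal disjointness, this forces $P_1$ and $P_2$ to use \emph{disjoint} sets of roads: any road used by $P_1$ has its internal $\roadverts$-endpoints visited only by $P_1$, and a road directly joining $u'$ to $v'$ admits only the one $u' \to v'$ orientation, so cannot be duplicated by $P_2$.

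With this structure, I would build an integer perturbation $g \in \integers^\roadset$ by setting $g_\roadvar = +1$ ($-1$) if $P_1$ uses $\roadvar$ in its positive (negative) direction, $g_\roadvar = -1$ ($+1$) if $P_2$ uses $\roadvar$ in its positive (negative) direction, and $g_\roadvar = 0$ otherwise. The no-branching fact makes $g$ satisfy the conservation constraint~\eqref{eq:conservation} at every $u \in \roadverts$, so $\Numarcs^* - g$ is feasible for Problem~\ref{prob:matching cost lower bound}. A per-road cost accounting---each $P_1$-road has all of its intervals' $\postcumval$ of a single sign matching the path's direction, so the shift moves every $\postcumval$ one unit toward zero and decreases the road's cost contribution by exactly $\roadlen_\roadvar$, while each $P_2$-road symmetrically contributes $+\roadlen_\roadvar$---gives a net cost change of $\mathrm{length}(P_2) - \mathrm{length}(P_1)$.

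Finally I would invoke the paper's standing assumption of unique shortest paths in the roadmap: since $P_1$ and $P_2$ are distinct simple paths between $u'$ and $v'$, their roadmap lengths must differ, and WLOG $\mathrm{length}(P_2) < \mathrm{length}(P_1)$, making $\cost(\Numarcs^* - g) < \cost(\Numarcs^*)$ and contradicting the optimality of $\Numarcs^*$. I expect the main obstacle to be the structural step---particularly the argument ruling out shared roads between $P_1$ and $P_2$---since once that is established, the subsequent perturbation construction and cost calculation become a direct extension of the preceding DAG lemma's template.
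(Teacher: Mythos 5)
Your proof takes essentially the same route as the paper's: perturb $\Numarcs^*$ by $\pm 1$ along the cycle formed by traversing one path forward and the other backward, and invoke the unique-shortest-path tie-breaking to conclude the two path lengths differ, so the perturbation strictly decreases $\cost$ and contradicts optimality. The paper's own proof is terser and skips all of your structural preprocessing (it never establishes internal vertex-disjointness or road-disjointness, nor that the divergence and convergence points lie in $\roadverts$ --- a claim that is in fact slightly overstated, since an interior point of a road can be a source or sink of out-/in-degree two); your extra bookkeeping is a reasonable attempt to make that step rigorous and does not change the underlying argument.
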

\begin{proof}
The proof is by contradiction, and is similar to the previous one.
Suppose that $G(\Numarcs^*)$ is optimal, but there are two distinct paths, $\pathvar$ and $\pathvar'$,
from some vertex $u$ to a vertex $v$.
(Recalling our assumption (w.l.g.) that shortest paths are unique, suppose $\pathvar$ is strictly shorter than $\pathvar'$.)
Let $\cyclevar$ denote a cycle which
traverses $\pathvar$ in the forward direction, and
traverses $\pathvar'$ in the reverse direction.
Let $\cyclevar'$ denote the corresponding cycle on roads in $\roadset$.
Defining $g$ as before, $\Numarcs^* + g$ is feasible, and
we can obtain $G(\Numarcs^* + g)$ from $G(\Numarcs^*)$ by incrementing the edge weights on $\pathvar$ and
decrementing the edge weights on $\pathvar'$.
(We remove any edges which obtain zero weight.)
This action decreases $\cost$, contradicting the optimality of $\Numarcs^*$.
\end{proof}

\subsection{Complexity Analysis} \label{sec:complexity}

The analysis of the prequel suggests a minimal matching algorithm in three fundamental steps.
%can be described as follows:
\begin{enumerate}
\item \label{item:transcribe}
Transcribe the instance of Problem~\ref{prob:matching cost lower bound}
generated by $\roadnet$, $\pointsone$, and $\pointstwo$;

\item \label{item:solve}
Obtain an optimal integer solution $\Numarcs^*$;

\item \label{item:construct}
Use the solution vector $\Numarcs^*$ to construct a matching $\Match^*$
with cost $\cost(\Match^*) = \cost(\Numarcs^*)$, following the procedure of Section~\ref{sec:construction}.
%(As demonstrated above, this step can be done in $O(\numpoints)$ time.)
\end{enumerate}
In this section we will demonstrate that all three steps can be completed
%We will demonstrate that all three steps above can be completed 
within $O(\numpoints \log \numpoints)$ time, in terms of $\numpoints$.
%The complexity of step~\ref{item:solve} will be the subject of Section~\ref{sec:solving}.

\subsubsection{Transcribing Problem~\ref{prob:matching cost lower bound}} \label{sec:transcribing}

\providecommand{\intervalmap}{{\mathcal Y}}
\providecommand{\cumvalarr}{{\mathcal F}}
\providecommand{\cumvalmap}{{\mathcal Z}}
\providecommand{\levelmap}{{\mathcal I}}
\providecommand{\linesmap}{{\mathcal C}}

\providecommand{\numroadpoints}{{K_1}}
\providecommand{\numlevels}{{K_2}}

An instance of Problem~\ref{prob:matching cost lower bound} is specified by
the vertex supplies $\surplus$ and the cost functions $\cost$, which we must therefore compute.
We rely on the key fact of Prop.~\ref{prop:number of lines upper bound} below.
\begin{prop} \label{prop:number of lines upper bound}
%The objectives $\cost(\argholder;\roadvar)$ are all piece-wise linear and convex, and
There are at most $2\numedges + 2\numpoints$ total linear pieces among the objectives $\cost(\argholder;\roadvar)$.
\end{prop}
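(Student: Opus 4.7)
The plan is to prove the bound by a direct counting argument grounded in Lemma~\ref{lemma:cost properties}. By that lemma, the cost function $\cost(\argholder;\roadvar)$ has exactly one linear piece per interval between consecutive values in the range of $\cumarcs(\argholder;\roadvar)$ (together with two unbounded pieces at the ends). Specifically, if $m_\roadvar$ denotes the number of distinct values taken by $\cumarcs(\argholder;\roadvar)$, then $\cost(\argholder;\roadvar)$ has exactly $m_\roadvar + 1$ linear pieces, indexed by the intervals $I_1,\ldots,I_{m_\roadvar+1}$ introduced in the proof of Lemma~\ref{lemma:cost properties}.

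The next step would be to bound $m_\roadvar$ in terms of the data on road $\roadvar$. Because $\cumarcs(\argholder;\roadvar)$ is a step function that starts at $0$ and jumps by $\pm 1$ at each point of $(\pointsone \cup \pointstwo) \cap \roadvar$, it is constant on at most $p_\roadvar + 1$ subintervals of $[0,\roadlen_\roadvar]$, where $p_\roadvar := |(\pointsone \cup \pointstwo) \cap \roadvar|$. Thus $m_\roadvar \leq p_\roadvar + 1$, and the number of linear pieces contributed by road $\roadvar$ is at most $p_\roadvar + 2$.

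Finally I would sum over all roads. Since each point in $\pointsone \cup \pointstwo$ lies on a unique road, we have $\sum_{\roadvar \in \roadset} p_\roadvar = 2\numpoints$, and of course $\sum_{\roadvar \in \roadset} 2 = 2\numedges$. Therefore the total number of linear pieces is at most
\[
\sum_{\roadvar \in \roadset} (m_\roadvar + 1) \;\leq\; \sum_{\roadvar \in \roadset} (p_\roadvar + 2) \;=\; 2\numpoints + 2\numedges,
\]
which gives the claim. There is no real obstacle here; the only subtlety is being careful that the counting of linear pieces in Lemma~\ref{lemma:cost properties} includes the two unbounded pieces at $\pm\infty$, and that points shared between roads at a roadmap vertex are handled by the convention that each tuple address $(\roadvar,\coordvar)$ with $\coordvar \in (0,\roadlen_\roadvar)$ lies on a single road.
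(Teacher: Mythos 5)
Your proof is correct and follows essentially the same route as the paper's: both arguments rest on the observation that $\cost(\argholder;\roadvar)$ has one linear piece per interval between consecutive values of $\cumarcs(\argholder;\roadvar)$ (so $m_\roadvar+1$ pieces for $m_\roadvar$ distinct values), that an empty road contributes two pieces, and that each point on a road adds at most one new value and hence at most one new piece. The paper phrases this incrementally (adding points one at a time) while you sum the per-road bound $p_\roadvar+2$ directly, but the counting is identical.
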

\begin{proof}
If a road $\roadvar \in \roadset$ has no points,
then $\cumarcs(\cdot;\roadvar)$ equals $0$ everywhere on $\roadvar$, i.e.,
$\cumarcs(\cdot;\roadvar)$ takes values from the set $\{ 0 \}$, and so
$\cost(\cdot;\roadvar)$ is linear over each of the intervals $(-\infty,0)$ and $(0,+\infty)$.
The $\numedges = |\roadset|$ total roads result in $2\numedges$ such ``free'' intervals.
Any point added to a road $\roadvar$ may introduce at most one additional value in the value set of $\cumarcs(\cdot;\roadvar)$.
If it does, then one of the pieces of $\cost(\cdot;\roadvar)$ is split into two.
There are at most $2\numpoints$ points, so the lemma holds.
The bound can be realized as long as $|\roadset| \geq 2$,
by placing all of $\pointsone$ on one road and
all of $\pointstwo$ on another.
\end{proof}

One way to obtain the data is by computing the following collection of arrays,
one per road $\roadvar \in \roadset$:
\begin{description}
\item[$\intervalmap_\roadvar = 
      \begin{bmatrix} y_0 & \ldots & y_{\numroadpoints+1} \end{bmatrix}$] 
a sorted array of
the $\numroadpoints$ locations of points on $\roadvar$; notably
$\coordvar_0 = 0$, and $\coordvar_\numroadpoints = \roadlen_\roadvar$
%plus endpoints $(\roadvar,0)$ and $(\roadvar,\roadlen_\roadvar)$
%i.e. $(r,y_i) \in \pointsone \cup \pointstwo \cup \roadverts$, 
%$y_i \le y_{i+1}$, $y_0 = 0$, and $y_{k+1}=\roadlen_\roadvar$

\item[$\cumvalarr_\roadvar = 
         \begin{bmatrix} f_0 & \ldots & f_{\numroadpoints} \end{bmatrix}$]
an array of $ f_i := \cumarcs(y_i;r)$ for each $y_i \in \intervalmap_\roadvar$; note,
$\cumarcs(y;\roadvar) = \cumval_i$ for all $y \in [y_i, y_{i+1})$

\item[$\cumvalmap_\roadvar = 
         \begin{bmatrix} z_0 & \ldots & z_{\numlevels} \end{bmatrix}$]
a sorted array of
[$z_0 = -\infty$, and]
the $\numlevels$ consecutive integers 
appearing in $-\cumvalarr_\roadvar$; notably,
%extended by $\SetOf{-\infty,\infty}$, 
%i.e. 
for all $i \in \SetOf{1 \ldots \numlevels}$ and for some $j$, $z_i = -f_j$ 
%$z_i < z_{i+1}$
%, and
%$z_0 = -\infty$

\item[$\levelmap_\roadvar = 
         \begin{bmatrix} I_0 & \ldots & I_\numlevels \end{bmatrix}$]
an array of ``level measures'' such that 
$I_i = \sum_J y_{j+1} - y_{j}$ where $J = \SetOf{j \SuchThat -f_j = z_i}$

\item[$\linesmap_\roadvar = 
         \begin{bmatrix} c_0 & \ldots & c_{\numlevels} \end{bmatrix}$]
an array of pairs $c_k = (\costoffset_k,\costslope_k)$ such that 
$\cost(z;\roadvar) = \costoffset_k + \costslope_k z$, for $z \in [z_{i}, z_{i+1}]$. 
\end{description}
For every road $\roadvar \in \roadset$,
$\surplus_\roadvar$ is equal to the last entry of $\cumvalarr_\roadvar$, and
$\linesmap_\roadvar$ contains the data of the linear pieces of $\cost(\argholder;\roadvar)$ in order of appearance.

Since the road set $\roadset$ is known ahead of time,
each point in $\pointsone \cup \pointstwo$ can be associated
to a collection $\intervalmap_\roadvar$ in $O(1)$ time (using, e.g., a hash function).
Therefore, sorting $\pointsone \cup \pointstwo$ into the collections $\{ \intervalmap_\roadvar \}$
can be done in a single sweep of $\pointsone \cup \pointstwo$ in $O(\numpoints \log \numpoints)$ time.
Each $\cumvalarr_\roadvar$ can be generated by a linear scan (accumulation) over $\intervalmap_\roadvar$, and so
the entire collection $\{ \cumvalarr_\roadvar \}$ can be obtained in $O( \numedges + \numpoints )$ time;
the $O(\numedges)$ term appears if every road is processed explicitly, even if it contains no points.
%}{can this be made $\numpoints$ time by skipping roads which contain no points?}.
%

$\cumvalmap_\roadvar$ can be built by sorting the distinct values appearing in $\cumvalarr_\roadvar$, and so 
it takes at most $O( \numedges + \numpoints \log \numpoints)$ time to obtain the whole collection.
$\levelmap_\roadvar$ can be built by the following procedure: 
Initialize the array to all zeros.
Perform a scan over $\cumvalarr_\roadvar$.
For each $k=0,\ldots,\numlevels$, add
the interval length $\coordvar_{k+1} - \coordvar_{k}$
to the level measure $I_j$ such that $\numarcs_j = -f_k$.
%perform an  $O(\log \numpoints)$ bisection search to find $-f_k$ in $\cumvalmap_\roadvar$;
%suppose it is found at $z_j$; then update $I_j$
%by adding to it the interval length .
There are $O(\numpoints)$ elements among all the $\cumvalarr_\roadvar$, so
%For each one we perform a $O(\log \numpoints)$ look-up, so
the total time to build the collection $\{ \levelmap_\roadvar \}$ is
$O( \numedges + \numpoints )$.
Finally each $\linesmap_\roadvar$ can be built by a scan
over
%$\cumvalmap_\roadvar$ and 
$\levelmap_\roadvar$, e.g.,
using~\eqref{eq:cost piece slope},
and the whole collection can be obtained in $O(\numedges + \numpoints)$ time . 

Adding up the times of all the steps, we obtain 
%the time of the whole procedure is 
$O( \numedges + \numpoints \log \numpoints)$ time to compute all arrays.
(The dependence on $\numedges$ can actually be removed, because
all the arrays have implicit value for any road which obtains no points.)

%The complexity of transcribing an instance of Problem~\ref{prob:matching cost lower bound} is governed by
%the complexity of obtaining the sets $\lineset_\roadvar$ for each $\roadvar \in \roadset$.
%
%After the sets $\lineset_\roadvar$ are computed, Problem~\ref{prob:matching cost lower bound} can be written in
%$O(|\roadset| \times \numpoints)$ time, since by Prop.~\ref{prop:number of lines upper bound}
%there are $O(\numpoints)$ constraints.
%%

\subsubsection{Solving Problem~\ref{prob:matching cost lower bound} for $\Numarcs^*$}

As demonstrated in Section~\ref{sec:solving convex cost flow problem}, $\Numarcs^*$ can be obtained in
$O( \numedges ( \numedges + \numnodes \log \numnodes ) \log \numpoints  )$ time
by a capacity-scaling algorithm,
provided an $O(1)$ procedure to compute each $\cost(\argholder;\roadvar)$.
One such procedure is
to find the linear data $(\costoffset_k,\costslope_k)$
associated with a query point $\numarcs$, and then
compute $\cost(\numarcs;\roadvar) = \costoffset_k + \costslope_k \numarcs$.
%
%the sequence $\linesmap_\roadvar$ can be treated as an $O(1)$ query-time representation of $\cost(\argholder;\roadvar)$,
The first task can be accomplished by simple positional lookup in the array $\linesmap_\roadvar$, since
the domain of the array is an ordered list of consecutive integers (i.e., $\cumvalmap_\roadvar$, except $-\infty$).
%such lookup can be performed in $O(1)$ time under a random-access computational model.

\subsubsection{Obtaining a Minimal Matching Given $\Numarcs^*$}

To obtain the final matching one must:
(i) construct the graph $G(\Numarcs^*)$,
(ii) compute a topological ordering of the vertices under $G(\Numarcs^*)$, and
(iii) execute a short program (lines~\ref{line:programstart}-\ref{line:programend} of Algorithm~\ref{alg:construct}) once per vertex.
The graph $G(\Numarcs^*)$ can be obtained in $O(\numedges + \numpoints)$ time
by enumerating the intervals in the arrays $\{ \cumvalarr_\roadvar \}$.
% for each $\roadvar \in \roadset$.
A topological ordering of the vertices
%of $G(\Numarcs^*)$
%of $\pointsone \cup \pointstwo \cup \roadverts$
can be obtained in $O(\numnodes + \numpoints)$ time using a standard algorithm, e.g.,
the one in~\cite[Sec.~22.4]{leiserson2001introduction}.
If the sets in the algorithm are implemented using linked-list queues, then
the duties of the program per vertex are all constant time except possibly the task of splitting a list.
It can be checked however, that at each of the vertices in $\pointsone \cup \pointstwo$, the split is trivial and we can skip it.
In the worst case we may suffer at most $\numnodes = |\roadverts|$ non-trivial splits,
of at most $O(\numpoints)$ operations each.
Therefore, the total time of the construction is $O( \numedges + \numnodes \numpoints )$.
Using a slightly more sophisticated data structure can reduce this time to
%$O( \numedges + \numpoints \log \numnodes)$:
$O( \numnodes\numedges + \numpoints )$:
Each sequence of adjacent points
stored in a linked-list queue
can be represented instead using a single index range.
Because $G(\Numarcs^*)$ is a multi-tree,
it can be proved that such queues will never obtain more than $2\numedges$ such ranges under Algorithm~\ref{alg:construct}.
%The first term in the bound results from $O(\numpoints)$ insertions and removals of $O(1)$ operations each, and
%the second term results from a worst-case scenario of a multi-tree (tree) arrangement of network vertices,
%with $O(\log_2 \numnodes)$ stages of even list splits, with each split in stage $k$ incurring $O(\numedges/2^k)$ operations.

\subsubsection{Total Runtime Complexity}

Adding the runtimes of all three components and re-arranging terms,
we find that the entire algorithm can be computed in
$O( \numpoints \log \numpoints ) +
	\log\numpoints \times
	O( \numedges ( \numedges + \numnodes \log \numnodes ) )$
%	+ O( \numedges )$
time.
For $\numpoints$ sufficiently large compared to the description of the roadmap, the algorithm is dominated by
the act of transcribing the matching instance as an instance of the convex cost flow problem,
followed by construction of the matching given the optimal solution $\Numarcs^*$;
in comparison to those steps, the search for the optimal solution $\Numarcs^*$ is relatively easy.
Moreover, the space requirements of the algorithm are linear in $\numnodes$, $\numedges$, and $\numpoints$, and so
the algorithm is strongly polynomial.

\section{Discussion} \label{sec:discussion}

We would like to remark that
the generalizations in this paper of the techniques in~\cite{werman1986bipartite} are essentially straightforward.
Instead, the crucial contribution of the present result is recognizing that
Problem~\ref{prob:matching cost lower bound} can be solved efficiently and
by a strongly polynomial algorithm.
Such knowledge seems to be fairly weakly disseminated; for example, as far as we know,
Ahuja et. al. only published the relevant min-cost flow algorithm in their text book~\cite{ahuja1993network}.
Depending on the sophistication of one's standard approaches,
Problem~\ref{prob:matching cost lower bound} could alternatively be reduced to
a network flow problem with $O(\numpoints)$ edges, or to
a generic linear optimization problem in $O(\numpoints)$ constraints.
In either case, most standard contemporary algorithms fail to meet the quite demanding $O(\numpoints \log \numpoints)$ runtime budget;
moreover, strongly polynomial algorithms tend to be elusive.
While the strongly polynomial property of our algorithm stems from the fact that the solution is integral (an assignment),
it might be possible to obtain a strongly polynomial, efficient algorithm for the more general \emph{transportation problem} on roadmaps.
This study highlights the fact that netflow flow problems are somewhat better understood,
in practice,
than general linear optimization problems.

\section*{Acknowledgements}

We thank Dr. Michael Otte and Kevin Spieser for helpful discussions.

\bibliographystyle{unsrt}
\bibliography{main}

\appendix

\section{Capacity Scaling Algorithm}

\providecommand{\flowvar}{x}
\providecommand{\potvar}{\pi}
\providecommand{\scalevar}{\Delta}
\providecommand{\capacity}{u}
\providecommand{\imbalance}{e}
\providecommand{\costvar}{c}

\begin{algorithm}[h!]
\caption{Capacity Scaling}
\label{alg:capacity scaling}
\begin{algorithmic}[1]
\renewcommand{\algorithmicrequire}{\textbf{Input:}}
\renewcommand{\algorithmicensure}{\textbf{Output:}}
\renewcommand{\algorithmiccomment}[1]{ // \emph{#1} }
\providecommand{\doInit}{{\bf initialize}:\ }
\REQUIRE
\ENSURE
\STATE \doInit flows $\flowvar := 0$ and potentials $\potvar := 0$
\STATE \doInit Compute imbalances $\imbalance$
\STATE \doInit $\scalevar := 2^{\lfloor \log U \rfloor}$
%\STATE \doInit $\scalevar$-residual graph $G(\flowvar;\scalevar)$ and reduced costs $\costvar^\potvar$
\WHILE{$\scalevar \geq 1$}
	\STATE [Re-] Compute $\scalevar$-residual graph $G(\flowvar;\scalevar)$ and reduced costs $\costvar^\potvar$
	\FOR{every arc $\arcvar$ in the residual network $G(x;\scalevar)$}
	\IF{$\costvar^\potvar_{\arcvar} < 0$}
		\STATE Send $\scalevar$ flow along arc $\arcvar$: update
			$\flowvar$, $\imbalance$, $G(\flowvar;\scalevar)$ and $\costvar^\potvar$
	\ENDIF
	\ENDFOR
	\STATE $\pointsone(\scalevar) := \{ i : \imbalance(i) \geq \scalevar \}$
	\STATE $\pointstwo(\scalevar) := \{ i : \imbalance(i) \leq -\scalevar \}$
	\WHILE{ $\pointsone(\scalevar) \neq \emptyset$ and $\pointstwo(\scalevar) \neq \emptyset$ }
		\STATE Choose $s \in \pointsone(\scalevar)$ and $t \in \pointstwo(\scalevar)$
		\STATE Determine shortest path distances $d(\cdot)$ from $s$ to all other nodes in $G(\flowvar;\scalevar)$
			with respect to the reduced costs $\costvar^\potvar$
		\STATE Let $\pathvar$ be the shortest path from $s$ to $t$ in $G(\flowvar;\scalevar)$
		\STATE Augment $\scalevar$ flow along the path $\pathvar$: update
				$\flowvar$, $\imbalance$, $G(\flowvar;\scalevar)$, $\pointsone(\scalevar)$, and $\pointstwo(\scalevar)$
		\STATE Update $\potvar := \potvar - d$ and reduced costs $\costvar^\potvar$
	\ENDWHILE	
	\STATE $\scalevar := \scalevar / 2$
\ENDWHILE
\RETURN $\flowvar$
\end{algorithmic}
\end{algorithm}

\end{document}